\pgfplotsset{compat=1.18}
\newtheorem{theorem}{Theorem}
\newtheorem{lemma}{Lemma}
\newtheorem{corollary}{Corollary}
\newcommand{\vol}{\operatorname{Vol}}
\newcommand{\ex}{\mathbb{E}}
\newcommand{\pr}{\mathbb{P}}
\title{On the Entropy of a Random Geometric Graph}
\author{\IEEEauthorblockN{Praneeth Kumar Vippathalla, Justin P. Coon, and Mihai-Alin Badiu}
 \IEEEauthorblockA{Department of Engineering Science\\ University of Oxford\\ OX1 3PJ Oxford, U.K. \\Email: \{praneeth.vippathalla,  justin.coon, mihai.badiu\}@eng.ox.ac.uk}
% % \thanks{ The authors are with the Department of Engineering Science, University of Oxford, OX1 3PJ Oxford, U.K. (Email: praneeth.vippathalla@eng.ox.ac.uk; mihai.badiu@eng.ox.ac.uk;justin.coon@eng.ox.ac.uk)}
\thanks{This research was funded in whole or in part by the U. S. Army Research Laboratory and the U. S. Army Research Office (W911NF-22-1-0070 and W911NF-24-2-0102). For the purpose of Open Access, the authors have applied a CC BY public copyright licence to any Author Accepted Manuscript (AAM) version arising from this submission.}}
\begin{document}
\IEEEoverridecommandlockouts
\maketitle

\begin{abstract}
  In this paper, we study the entropy of a \emph{hard} random geometric graph (RGG), a commonly used model for spatial networks, where the connectivity is governed by the distances between the nodes. Formally, given a connection range $r$, a hard RGG $G_m$ on $m$ vertices is formed by drawing $m$ random points from a spatial domain, and then connecting any two points with an edge when they are within a distance $r$ from each other. The two domains we consider are the $d$-dimensional unit cube $[0,1]^d$ and the $d$-dimensional unit torus $\mathbb{T}^d$. We derive upper bounds on the entropy $H(G_m)$ for both these domains and for all possible values of $r$.  In a few cases, we obtain an exact asymptotic characterization of the entropy by proving a tight lower bound.  Our main results are that $H(G_m) \sim dm \log_2m$ for $0 < r \leq 1/4$ in the case of $\mathbb{T}^d$ and that the entropy of a one-dimensional RGG on $[0,1]$ behaves like $m\log m$ for all $0<r<1$. As a consequence, we can infer that the asymptotic structural entropy of an RGG on $\mathbb{T}^d$, which is the entropy of an unlabelled RGG, is $\Omega((d-1)m \log_2m)$ for $0 < r \leq 1/4$. For the rest of the cases, we conjecture that the entropy behaves asymptotically as the leading order terms of our derived upper bounds.
\end{abstract} 

\section{Introduction} \label{sec:introduction}
In many real-world networks, such as wireless, brain and social networks, connectivity is governed by the spatial separation of entities. For instance, devices in a wireless network share a communication link if they are geographically close enough to one another. A random network model that captures the fundamental aspects of these networks, and is widely used in their study, is the random geometric graph (RGG) model. Here, a random graph is generated by scattering $m$ nodes uniformly at random in a spatial domain and drawing an edge between any two nodes based on the distance between them. Over the past few decades, a variety of problems related to RGGs have been studied in detail, aiming to understand the statistics of graph properties \cite{PenroseBook,haenggi_2012}, the recoverability of the latent spatial embedding \cite{dani2024}, and their combinatorial properties \cite{mcdiarmid_diskgraph_2014}. In contrast, we are interested in the information-theoretic compression of these graphs.

In order to develop efficient storage methods for large graph datasets, researchers have recently focused on studying graph compression \cite{choi_structural_entropy,delgosha_universal_compression,nikpey_graph_side, martin_rate_distortion_sbm, bustin_lossy, mihai2021structural,Abbe2016GraphClusters, coon2018entropy,mihai_2018_dist_rgg,praneeth_side_2025,lossy_spatial_24}. The graph sources are typically non-standard and are unlike independent and identically distributed (i.i.d.) sources, which makes their study particularly challenging. Nevertheless,  many works have attempted to characterize Shannon entropy of various graph sources, because of its importance in lossless compression. For a random source $X$, it is well-known that the minimum expected length of a prefix-free code $l
^*(X)$ is within $1$ bit from the entropy $H(X)$\cite{cover_thomas}, i.e.,
$$H(X) \leq l^*(X) < H(X)+1,$$
and the minimum expected length of a one-to-one code $\ell^*(X)$ is close to the entropy \cite{alon_1994}, \cite[Lem.~1]{Abbe2016GraphClusters}:
$$H(X) - \log (H(X)+1) - \log e\leq \ell^*(X) < H(X).$$
% It is straightforward to conclude that the entropy of an Erd\H{o}s-R\'enyi (ER) random graph\footnote{In an ER graph, every pair of nodes is connected by an edge independently with probability $p$.} with $m$ nodes and edge probability $p$ is given by $
%     \binom{n}{2} h_2(p)
% $, where $h_2(p)$ is the binary entropy function with parameter $p$. 
Results on the characterization of entropy are known for Erd\H{o}s-R\'enyi (ER) random graphs and their structures\cite{choi_structural_entropy}, stochastic block model (SBM) graphs, \cite{Abbe2016GraphClusters}, random geometric graphs and their structures \cite{coon2018entropy, mihai_2018_dist_rgg, mihai2021structural}, among many others. 

The asymptotic behaviour of the entropy of a \emph{soft} RGG, where the connection rule is probabilistic, is well understood \cite{coon2018entropy}. On the other hand, the entropy of \emph{hard} RGG, where the nodes are connected by an edge iff they are within a connection range $r$, remains understudied. However, in a prior work \cite{mihai2021structural}, it was hinted that the asymptotic behavior of the entropy of a one-dimensional RGG on the unit interval $[0,1]$ is $m \log_2 m$ bits, where $m$ is the number of vertices of the graph. Later, \cite{Paton2022labelled_entropy} gave an argument for this entropic behavior in a special case of the one-dimensional model, i.e., when the connection range decays linearly to zero. In our current work, we will consider a hard RGG on two domains:  $d$-dimensional unit cube $[0,1]^d$ and the $d$-dimensional unit torus $\mathbb{T}^d$ with a fixed connection range $r$. Along with deriving asymptotic upper bounds on the entropy, we will proof under some conditions a matching lower bound, giving our main result
$$H(G_m) \sim dm\log_2 m,$$
when $r\leq \frac{1}{4}$ for an RGG $G_m$ with $m$ vertices on $\mathbb{T}^d$. 
Finally, we will use this result to infer the structural entropy of an RGG.

\section{Preliminaries}\label{sec:prelim}
Let $G_m=(V, E)$ be a graph with the vertex set ${V}=[m]:=\{1,2,\ldots,m\}$ and the edge set $E$ containing unordered pairs of vertices. We restrict to undirected and simple graphs, where simple means that there are no self-loops or multiple edges between any pair of two vertices.

For defining random geometric graphs, we will focus on two spatial domains: the $d$-dimensional unit cube $[0,1]^d$ and the $d$-dimensional unit torus $\mathbb{T}^d$. The unit torus is nothing but $[0,1]^d$ with the boundaries wrapped around. In the unit cube, the distance between two points is given by the Euclidean distance, whereas in the unit torus, it is given by the toroidal distance\footnote{In fact, the minimization in the definition is equivalent to the minimization over $z \in \left\{-1, 0, 1\right\}^d$.} \cite{PenroseBook, haenggi_2012}, which is defined as  $$d_t(x,y):=\min\{\lVert x+z-y \rVert: z \in \mathbb{Z}^d\} \quad \text{for} \quad x, y\in \mathbb{T}^d$$
with $\lVert \cdot \rVert$ being the Euclidean norm.

For a fixed connection range $r \geq 0$, we say that a graph $G_m$ is a \emph{hard geometric graph} on $[0,1]^d$ (resp. $\mathbb{T}^d$) if there exist  $m$ points $x_1, x_2, \ldots, x_m \in [0,1]^d$ (resp. $\mathbb{T}^d$) such that for any $u,v \in V$, $||x_u -x_v|| \leq r$ (resp. $d_t(x_u,y_v) \leq r$ ) if and only if $u\sim v$, i.e., $u$ is adjacent to $v$ in $G_m$. The toroidal metric overcomes some of problems posed by the boundary effects of the unit cube, i.e., the  local connectivity properties of the points near the boundary of $[0,1]^d$ is different from those of the points well within it.

In the {hard} random geometric graph model, we draw $m$ points $X_1, X_2, \ldots, X_m$ independently with uniform distribution on $[0,1]^d$ (resp. $\mathbb{T}^d$) and form the corresponding geometric graph $G_m$ on the vertex set $[m]$. This graph generation process induces a probability distribution $P_{G_m}$ on the set of all graphs. It is useful to think of $G_m$ in terms of its adjacency matrix representation $(E_{i,j})_{i,j \in [m]}$, where $E_{i,j}=1$ if $i$ and $j$ are adjacent in $G_m$, and $E_{i,j}=0$ otherwise. As the graphs are simple and undirected, this matrix is symmetric with zeros on the diagonal. It means that $G_m$ can  equivalently be represented by $\left\{E_{i,j}: 1\leq i < j \leq m \right\}$. Hence, the probability of a random geometric graph $G_m$ is completely specified by the joint distribution of the collection $\left\{E_{i,j}: 1\leq i < j \leq m \right\}$, and vice versa. In the case of  the Erd\H{o}s-R\'enyi graph model, where we add an edge between two vertices with some probability independently of the rest of the edges, this collection is mutually independent. However, in the random geometric graph model, this collection is correlated because of the underlying geometry. For example, if $u \sim v$ and $u \sim w$, then more often than not we see an edge between $v$ and $w$.

\subsection{Entropy of $G_m$}
Let $\mathcal{G}_m \subseteq 2^{^{\left[\binom{m}{2}\right]}}$ denote the set of all geometric graphs on $m$ vertices with the connection range $r$, i.e., $\mathcal{G}_m$ is the support of the probability distribution $P_{G_m}$. The entropy\footnote{With a logarithm base $2$, we measure the entropy in bits. In the rest of the note, though we omit the usage of base $2$, the entropy is still being measured in bits.} of a random geometric graph $G_m$ is defined as 
\begin{align*}
    H(G_m) \triangleq -\sum_{g \in \mathcal{G}_m} P_{G_m}(g) \log_2 P_{G_m}(g).
\end{align*}
As $G_m$ can be uniquely identified with the collection of random variables $\left\{E_{i,j}: i < j \right\}$, we have $H(G_m) = H(\left\{E_{i,j}: i < j \right\})$. By using various properties of entropy, the following upper and lower bounds were noted in the work of \cite{coon2018entropy}: 
\begin{align*}
    H(G_m|X_1, \ldots, X_m) \leq H(G_m) \leq \sum_{ i < j }H(E_{i,j}),
\end{align*}
where $X_1, \ldots, X_m \in [0,1]^d$ are the random variables corresponding to the locations of the vertices. 
% These bounds hold for a more general class of connection rules: given a function $\varphi: [0,1]^d \times [0,1]^d \to [0,1]$, the vertices $u$ and $v$ in $G_m$ are connected by an edge with probability $\varphi(x_u,x_v)$. The graph formed in this way is referred to as a \emph{soft random geometric graph}. In our case, where $\varphi(x, y)= \mathds{1}_{\lVert x -y \rVert \leq r}$, the above bound becomes
% These bounds hold for a more general class of  connection rules, where two vertices $u$ and $v$ are connected by an edge with a probability that is a function of the distance between their underlying locations $x_u$ and $x_v$, resulting in the so called \emph{(soft) random geometric graph}.
Though these bounds are useful in determining the behavior of  $H(G_m)$ for soft random geometric graphs, they fall short in the case of hard RGGs, which is due to the fact that the lower bound evaluates to zero:
\begin{align}\label{ineq:general_bounds}
    0 \leq H(G_m) \leq \binom{m}{2}h_2(p_r),
\end{align}
where $h_2(x):= -x\log_2 x -(1-x)\log_2(1-x)$ is the binary entropy function, and $p_r$ is the probability that two random nodes are within a distance of $r$ from each other.  For a fixed $r$, the upper bound in \eqref{ineq:general_bounds} behaves like $m^2/2$ and it is tight when the $E_{i,j}$'s are mutually independent. However, since $E_{i,j}$'s are correlated in an RGG, it is reasonable to say that $m^2/2$ might not be the right behavior. In the subsequent sections, we will try to improve these bounds to characterize the exact behavior of $H(G_m)$.

\section{Entropy of an RGG}\label{sec:bounds}
\subsection{$d$-dimensional unit cube $[0,1]^d$}
In this section, we present the results on the entropy of an RGG on the $d$-dimension unit cube $[0,1]^d$. First, we will give an upper bound for all fixed connection ranges $0 < r < \sqrt{d}$. Observe that when $r$ equals $0$ or $1$, the resulting graph is always an empty graph or a complete graph, respectively, whose entropy is trivially zero. Hence, we exclude those two extreme connection ranges.
\begin{theorem}\label{thm:ent_rgg_upper_euclidean}
An upper bound on the entropy of an RGG $G_m$ on the $d$-dimensional unit cube $[0,1]^d$ is given by
     \begin{align*}
        &H(G_m) \nonumber \\
        &\leq \begin{cases}
            dm\log m + o(m \log m) & \text{ if } 0 < r \leq \frac{\sqrt{d}}{2},\\
            [1-\beta(r)]dm\log m + o(m \log m) & \text{ if }  \frac{\sqrt{d}}{2} \leq r < \sqrt{d}.
        \end{cases}
    \end{align*}   
where $\beta(r)$ is the volume of the ball $B\left((1/2,1/2,\ldots,1/2); r-\sqrt{d}/2\right) \cap [0,1]^d$. 
\end{theorem}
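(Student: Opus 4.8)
My plan is to exploit two elementary facts: that $G_m$ is a deterministic function of the node positions $X_1,\dots,X_m$ which is unchanged by small perturbations that do not move any pairwise distance across the threshold $r$, and that when $r>\sqrt d/2$ every node lying near the centre of the cube is automatically adjacent to \emph{all} other nodes. From these I would construct a ``lossy description'' of $G_m$ whose entropy is easy to bound, and then separately control the residual uncertainty. Fix a resolution $\delta=\delta_m\to 0$ to be chosen later and partition $[0,1]^d$ into the axis-aligned grid of $\ceil{1/\delta}^d$ cells of side $\le\delta$. Set $c=(1/2,\dots,1/2)$ and $S:=B(c;\,r-\sqrt d/2)\cap[0,1]^d$ (so $S=\emptyset$ when $r\le\sqrt d/2$), and note $\vol(S)=\beta(r)$. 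Since $\|c-y\|\le\sqrt d/2$ for all $y\in[0,1]^d$, the triangle inequality gives that $X_u\in S$ implies $\|X_u-X_v\|\le r$ for every $v$, i.e.\ $u$ is adjacent to everyone. Define $B_i:=\mathbf 1\{X_i\in S\}$ and let $Y_i$ be the index of the grid cell containing $X_i$ when $B_i=0$, and a fixed placeholder otherwise. Then $H(G_m)\le H(G_m,\mathbf B,\mathbf Y)=H(\mathbf B,\mathbf Y)+H(G_m\mid\mathbf B,\mathbf Y)$, and the task reduces to bounding the two terms.

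For the description term, subadditivity and the fact that $(B_i,Y_i)$ are i.i.d.\ give $H(\mathbf B,\mathbf Y)\le m\,H(B_1,Y_1)\le m\big(H(B_1)+\pr[B_1=0]\,H(Y_1\mid B_1=0)\big)\le m\big(1+(1-\beta(r))\,d\log\ceil{1/\delta}\big)$, using $H(Y_1\mid B_1=1)=0$ and that $Y_1$ takes at most $\ceil{1/\delta}^d$ values. For the residual term I would use the chain rule and that conditioning cannot increase entropy: $H(G_m\mid\mathbf B,\mathbf Y)\le\sum_{i<j}H(E_{i,j}\mid B_i,B_j,Y_i,Y_j)$. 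A summand vanishes unless $B_i=B_j=0$ and the two cells $Y_i,Y_j$ ``straddle'' the sphere of radius $r$, i.e.\ contain point pairs on both sides of distance $r$; since each cell has diameter $\le\sqrt d\,\delta$, straddling forces $\|X_i-X_j\|\in[r-2\sqrt d\,\delta,\ r+2\sqrt d\,\delta]$. Hence each summand is at most $\pr\big[\|X_i-X_j\|\in[r-2\sqrt d\,\delta,\ r+2\sqrt d\,\delta]\big]$, which for two independent uniform points is the probability of landing in a spherical shell of width $O(\delta)$ and is therefore $O(\delta)$ with a constant depending only on $d$ and $r$. Thus $H(G_m\mid\mathbf B,\mathbf Y)=O(m^2\delta)$.

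Finally, taking for instance $\delta_m=(m\log m)^{-1}$ makes $d\log\ceil{1/\delta_m}=d\log m+O(\log\log m)$ and $m^2\delta_m=m/\log m=o(m\log m)$, so the two bounds combine to $H(G_m)\le(1-\beta(r))\,dm\log m+o(m\log m)$; and when $r\le\sqrt d/2$ one has $\beta(r)=0$, recovering the first case. The only steps requiring genuine care are (i) the triangle-inequality claim that nodes in $S$ are universally adjacent, which is what identifies the correct geometric quantity $\beta(r)=\vol\big(B(c;r-\sqrt d/2)\cap[0,1]^d\big)$, and (ii) the uniform $O(\delta)$ estimate for the shell probability, essentially a bound on the density of $\|X_i-X_j\|$ near $r$; I expect (ii) to be the main technical nuisance, though both are elementary and the dominant effort is just bookkeeping to keep all error terms $o(m\log m)$.
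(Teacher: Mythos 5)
Your argument is correct, and it reaches both cases of the theorem by a genuinely different route from the paper. The paper follows McDiarmid--M\"uller: it applies Warren's theorem to the $\binom{m}{2}$ degree-$2$ polynomials $\lVert X_u-X_v\rVert^2-r^2$ to bound the number of realizable sign patterns, hence $\lvert\mathcal{G}_m\rvert\leq m^{dm}C^{dm}$ and $H(G_m)\leq\log\lvert\mathcal{G}_m\rvert$; for $\frac{\sqrt d}{2}\leq r<\sqrt d$ it then conditions on the random core set $S$ of vertices in $B\left(c;r-\sqrt d/2\right)\cap[0,1]^d$ (costing at most $m$ bits) and reuses the cardinality bound on $G[V\setminus S]$, whose expected size is $(1-\beta(r))m$. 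You replace the sign-pattern count by a quantization argument: encode which nodes are in the core (the same triangle-inequality observation as the paper) plus the grid cell of each non-core node at resolution $\delta_m=(m\log m)^{-1}$, which costs $(1-\beta(r))dm\log m+O(m\log\log m)$ bits, and then observe that the residual uncertainty $\sum_{i<j}H(E_{i,j}\mid B_i,B_j,Y_i,Y_j)$ is supported on cell pairs straddling the radius-$r$ sphere, an event forcing $\lVert X_i-X_j\rVert$ into a shell of width $4\sqrt d\,\delta_m$ whose probability is $O(\delta_m)$ uniformly (annulus volume $c_d\bigl[(r+2\sqrt d\delta)^d-(r-2\sqrt d\delta)^d\bigr]=O(\delta)$), so the residual is $O(m^2\delta_m)=o(m\log m)$. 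Your route is more elementary and self-contained (no real-algebraic-geometry input), at the price of a slightly larger lower-order term ($O(m\log\log m)$ versus the paper's $O(m)$), which is immaterial for the statement; the paper's approach has the side benefit of bounding $\lvert\mathcal{G}_m\rvert$ itself, a bound it reuses verbatim (with a different polynomial family) for the torus case.
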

\begin{proof}
    See Section~\ref{proof:thm:ent_rgg_upper_euclidean}.
\end{proof}

Observe that the leading order term of the upper bound in Thm.~\ref{thm:ent_rgg_upper_euclidean} is less than $dm \log m$ and it decreases with $r$ beyond $\frac{\sqrt{d}}{2}$. This refinement is possible due to the fact that when the connection range is larger than $\frac{\sqrt{d}}{2}$, all the random points that lie within  $B\left((1/2,1/2,\ldots,1/2); r-\sqrt{d}/2\right) \cap [0,1]^d$ are connected by edges to the rest of the points, and such connections are significant in a random graph, resulting in entropy loss. The proof of Thm.~\ref{thm:ent_rgg_upper_euclidean} involves bounding the cardinality of the ensemble of RGGs for all $0<r< \sqrt{d}$, following the argument of \cite{mcdiarmid_diskgraph_2014} that uses the Warren's theorem (Thm.~\ref{thm:warren}), a result on the number of possible sign patterns of a collection of polynomials. Combining this cardinality bound with entropic inequalities, we obtain a refined bound for  $\frac{\sqrt{d}}{2} \leq r < \sqrt{d}$.

Due to the boundary effects of the domain, the analysis for a lower bound is technically challenging. However, we believe that the leading order terms are the correct asymptotic behavior of the entropy. To confirm this, we considered the case of $d=1$ and proved a matching lower bound, which is presented in the following theorem. 
% We already know that the entropy of a random geometric graph $G_m$ in $[0,1]^d$ with the Euclidean metric is lower bounded by 
%      \begin{align}
%         H(G_m) \geq  C_r \cdot dm\log m - o(m \log m),
%     \end{align}   
% where $0< C_r < 1$. 
% The result has been reported in a previous interim progress report (2023). Though we can identify the order at which the entropy grows from this result, which is $m \log m$ by combining with the upper bound in Theorem~\ref{thm:ent_rgg_upper_euclidean}, the constant is not tight.

\begin{theorem}\label{thm:ent_rgg_lower_one_d}
The entropy of an RGG $G_m$ on $[0,1]$ is lower bounded as follows:
     \begin{align}
        H(G_m) \geq \begin{cases}
            m\log m - o(m \log m) & \text{ if } 0 < r \leq \frac{1}{2},\\
            2(1-r)m\log m - o(m \log m) & \text{ if }  \frac{1}{2} \leq r < 1.
            \end{cases}
    \end{align}   
\end{theorem}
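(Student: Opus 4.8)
The plan is to lower bound $H(G_m)$ by exhibiting a deterministic decoding map that recovers a large amount of information from $G_m$ alone, so that $H(G_m) \geq H(f(G_m))$ for a suitable function $f$, and then to lower bound $H(f(G_m))$ directly. On $[0,1]$ the key structural fact is that a random geometric graph is (with probability $1$) a unit interval graph, and its combinatorial structure essentially records the \emph{ordering} of the points $X_1,\dots,X_m$ on the line, at least up to the resolution imposed by $r$. Concretely, partition $[0,1]$ into $k = \lfloor 1/r \rfloor$ (or a slightly smaller number, chosen so consecutive blocks are non-adjacent) equal blocks of length $> r$; points in non-consecutive blocks are never adjacent, while points within the same block are always adjacent. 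From $G_m$ one can read off, for each vertex, a consistent assignment to blocks (the closed neighbourhoods are nested within a ``staircase'' pattern, so the block structure is identifiable from the graph up to a global reflection), and hence recover the partition of $[m]$ into the tuple of block-membership sets. Taking $f(G_m)$ to be this ordered set partition, $H(G_m) \geq H(f(G_m))$.

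Next I would lower bound $H(f(G_m))$. Conditioned on the multiset of block-sizes $(m_1,\dots,m_k)$ — which, since the $X_i$ are i.i.d.\ uniform, is multinomial with all probabilities $\Theta(1)$ — the labelled assignment of which vertex goes to which block is uniform over the $\binom{m}{m_1,\dots,m_k}$ arrangements. By concentration (Chernoff), each $m_j = m/k + o(m)$ with high probability, so $\log \binom{m}{m_1,\dots,m_k} = m\log k - \sum_j m_j \log(m_j) \cdot(\text{lower order}) $; more carefully, $\log\binom{m}{m_1,\dots,m_k} \geq m \log k - O(\log m) \geq m\log m - m\log(1/r) - o(m\log m)$ on the high-probability event, where I used $k = \Theta(1/r) = \Theta(1)$ so that $m\log k = o(m\log m)$ is absorbed. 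Taking expectations and using $H(f(G_m)) \geq H(f(G_m) \mid (m_1,\dots,m_k)) = \ex\big[\log\binom{m}{m_1,\dots,m_k}\big]$, together with the fact that the low-probability complement contributes negligibly, gives $H(G_m) \geq m\log m - o(m\log m)$, which is the first case. For the second case $\tfrac12 \leq r < 1$ one can only fit essentially one ``free'' block of length $2(1-r)$ near an endpoint (the middle of the interval is forced to be adjacent to everything, contributing no entropy), and a more delicate version of the same counting — tracking how many of the $m$ points land in the sub-interval of length $2(1-r)$ where the ordering is still recoverable from $G_m$ — yields the factor $2(1-r)$; here the block refinement must be taken at scale $\delta \to 0$ rather than $\Theta(1)$, and one balances the number of blocks $\asymp (1-r)/\delta$ against the $o(\cdot)$ error.

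The main obstacle, I expect, is making precise the claim that the ordered set partition $f(G_m)$ is genuinely a function of the graph $G_m$ and not of the hidden geometry — i.e.\ that the block structure is reconstructible from the isomorphism-type of the labelled graph (or even just from the adjacency relation on $[m]$) with probability $1$. This requires an argument that, a.s., the interval-graph structure determines a canonical linear layout up to reflection: one shows that consecutive blocks overlap in their neighbourhood footprints in a way that pins down the block boundaries, using that a.s.\ every block is non-empty and every pair of adjacent blocks actually has a ``witnessing'' pair of close points. A secondary technical point is handling the boundary blocks and the choice of $k$ so that the ``non-consecutive blocks are non-adjacent'' property holds exactly; this forces $k \le \lfloor 1/r\rfloor$ and a careful treatment when $1/r$ is near an integer, but it only affects constants inside the $o(m\log m)$ term. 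The concentration and entropy-counting steps are then routine.
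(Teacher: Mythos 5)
There is a genuine gap here, and in fact two separate ones. First, your extracted object $f(G_m)$ (the ordered block partition) is not a function of the graph. The graph only records, for each pair, whether $|X_i-X_j|\le r$; it carries no information about absolute positions. Two point configurations that differ by a small global shift (or any perturbation preserving all pairwise comparisons with $r$) produce the identical labelled graph while assigning many vertices to different blocks, and this ambiguity is generic, not a measure-zero event. So the inequality $H(G_m)\ge H(f(G_m))$ cannot even be written down: no amount of ``neighbourhood footprint'' analysis can pin down block boundaries that the adjacency relation simply does not encode. What the graph does determine (a.s., up to reflection and up to permuting vertices with equal closed neighbourhoods) is the left-to-right \emph{ordering} of the points; an argument that recovers that ordering, whose entropy is $\log m!-O(m)$ since twin ambiguities cost only $O(m)$ bits, could plausibly be repaired into a proof for $r\le 1/2$, but that is not what you proposed.

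Second, even granting recoverability, the counting step is wrong by an order of magnitude. With $k=\lfloor 1/r\rfloor=\Theta(1)$ blocks and $m_j\approx m/k$, Stirling gives
\begin{align*}
\log\binom{m}{m_1,\ldots,m_k} \;=\; m\log m-\sum_j m_j\log m_j+O(k\log m)\;\approx\; m\log k\;=\;\Theta(m),
\end{align*}
which is $o(m\log m)$, not $m\log m - m\log(1/r)-o(m\log m)$ as you claim; that inequality is false for constant $k$. A partition into constantly many blocks can never carry $m\log m$ bits, and refining the blocks to scale $\delta\to 0$ (as you suggest for $r\ge 1/2$) destroys the very property you need, since adjacency only resolves distances at scale $r=\Theta(1)$. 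The paper avoids both issues entirely: it writes $H(G_m)=-h(X_1,\ldots,X_m\mid G_m)$, peels off one vertex at a time, and upper-bounds the expected volume of the region of $[0,1]$ consistent with the new vertex's edge pattern using order statistics (maximum gap $\approx \log m/m$ for $r\le 1/2$; a conditional Jensen argument on the event that not all edges are present, which has probability $\approx 2(1-r)$, for $r\ge 1/2$), yielding $\log m$ (resp. $2(1-r)\log m$) per vertex.
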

\begin{proof}
    See Section~\ref{proof:thm:ent_rgg_lower_one_d}.
\end{proof}

The proof of Thm.~\ref{thm:ent_rgg_lower_one_d} relies on the order statistics of $m$ points that are uniformly and independently drawn from $[0,1]$. By combining Thms.~\ref{thm:ent_rgg_upper_euclidean} and \ref{thm:ent_rgg_lower_one_d}, and noting that $\beta(r)=2r-1$ for $\frac{1}{2}\leq r <1$, we have the following result. The limit of the entropy normalized by $m\log m$ as a function of the connection range $r$ is plotted in Fig.~\ref{fig:my_label}.

\begin{theorem}\label{thm:asymp:cube}
    The asymptotic characterization of the entropy of an RGG $G_m$ on $[0,1]$ is given by
    \begin{align}
        H(G_m) \sim \begin{cases}
            m\log m & \text{ if } 0 < r \leq \frac{1}{2},\\
            2(1-r)m\log m & \text{ if }  \frac{1}{2} \leq r < 1.
            \end{cases}
    \end{align}
\end{theorem}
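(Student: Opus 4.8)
The plan is to obtain Theorem~\ref{thm:asymp:cube} directly as a corollary of the matching bounds already established in Theorems~\ref{thm:ent_rgg_upper_euclidean} and~\ref{thm:ent_rgg_lower_one_d}, specialized to $d=1$. Recall that $f_m \sim g_m$ means $f_m/g_m \to 1$, so it suffices to check that the upper and lower bounds on $H(G_m)$ share the same leading-order term $\Theta(m\log m)$ for every fixed $r\in(0,1)$, and that the $o(m\log m)$ remainders are genuinely of smaller order than that common term.

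First I would handle the regime $0 < r \le \tfrac12$. Here Theorem~\ref{thm:ent_rgg_upper_euclidean} with $d=1$ gives $H(G_m) \le m\log m + o(m\log m)$, while Theorem~\ref{thm:ent_rgg_lower_one_d} gives $H(G_m) \ge m\log m - o(m\log m)$. Dividing by $m\log m$ and letting $m\to\infty$ yields $H(G_m)/(m\log m)\to 1$, i.e. $H(G_m)\sim m\log m$. Next I would treat $\tfrac12 \le r < 1$, where the only genuine computation is the evaluation of $\beta(r)$ in one dimension. By definition $\beta(r) = \vol\bigl(B(1/2;\, r-\tfrac12)\cap[0,1]\bigr)$; in $d=1$ the ball $B(1/2;\, r-\tfrac12)$ is the interval $(1-r,\,r)$, which for $\tfrac12 \le r < 1$ lies entirely in $[0,1]$ and has length $2r-1$. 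Hence $\beta(r) = 2r-1$ and $1-\beta(r) = 2(1-r)$, so Theorem~\ref{thm:ent_rgg_upper_euclidean} gives $H(G_m) \le 2(1-r)m\log m + o(m\log m)$, matching the lower bound $H(G_m) \ge 2(1-r)m\log m - o(m\log m)$ of Theorem~\ref{thm:ent_rgg_lower_one_d}. Dividing by $2(1-r)m\log m$, which is a strictly positive quantity of order $m\log m$ for each fixed $r<1$, and taking $m\to\infty$ gives $H(G_m)\sim 2(1-r)m\log m$.

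There is no real obstacle here: the statement is a stitching-together of the two preceding theorems, and the only points requiring any care are bookkeeping ones — verifying that the two piecewise expressions agree at the crossover $r=\tfrac12$ (both case~1 and case~2 of each theorem evaluate to $m\log m$ there, so the result is well defined), and noting that the comparison breaks down only as $r\to 1$, which is exactly the degenerate endpoint excluded from the statement since the graph is then almost surely complete.
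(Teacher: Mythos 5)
Your proposal is correct and follows exactly the paper's own route: Theorem~\ref{thm:asymp:cube} is obtained by combining the $d=1$ case of Theorem~\ref{thm:ent_rgg_upper_euclidean} with Theorem~\ref{thm:ent_rgg_lower_one_d}, together with the observation that $\beta(r)=2r-1$ (so $1-\beta(r)=2(1-r)$) for $\tfrac12 \le r < 1$. The bookkeeping details you add (consistency at $r=\tfrac12$, positivity of the leading coefficient for fixed $r<1$) are all sound.
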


\begin{figure}[h]
\hspace*{0.7cm}
       \resizebox{.6\columnwidth}{!}{\begin{tikzpicture}
\begin{axis}[
    width=8cm,
    height=6.5cm,
    xmin=0, xmax=1.05,
    ymin=0, ymax=1.1,
    axis lines=box,
    xlabel={$r$},
    ylabel={$\displaystyle \lim_{m\to\infty}\frac{H(G_m)}{m\log m}$},
      ylabel style={
        rotate=270,
        anchor=center,
        at={(axis description cs:-0.3,0.5)},
        font=\small
    },
    xtick={0,0.2,0.4,0.6,0.8,1},
    ytick={0,0.2,0.4,0.6,0.8,1},
    tick style={black},
    label style={font=\small},
    legend style={draw=none},
]

% Horizontal part: y = 1, r in [0, 0.5]
\addplot[
    blue,
    thick,
    domain=0:0.5
] {1};

% Sloped part: y = 2(1-r), r in [0.5, 1]
\addplot[
    blue,
    thick,
    domain=0.5:1
] {2*(1-x)};

% Dashed vertical line at r = 0.5
\addplot[
    dashed,
    black
] coordinates {(0.5,0) (0.5,1)};

% Text annotations
\node at (axis cs:0.25,1.05) {$1$};
\node at (axis cs:0.82,0.6) {$2(1-r)$};

% % Left-side limit text
% \node[anchor=east] at (axis cs:0,0.55)
% {$\displaystyle \lim_{m\to\infty}\frac{H(G_m)}{m\log m}$};

\end{axis}
\end{tikzpicture}}
       \caption{Limit of $H(G_m)/m\log m$ of a one-dimensional RGG}
       \label{fig:my_label}
   \end{figure}
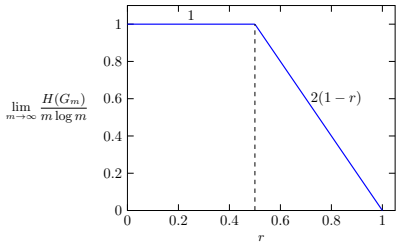

\subsection{$d$-dimensional unit torus $\mathbb{T}^d$}
Some of the technical difficulties posed by boundary effects of the domain $[0, 1]^d$ can be avoided by considering the unit torus $\mathbb{T}^d$, where the local connectivity is statistically the same at every point.  Note that the connection range on $\mathbb{T}^d$ cannot\footnote{For any $x, y\in \mathbb{T}^d$, $d_t(x,y) \leq \frac{\sqrt{d}}{2}$ because $d_t(x,y)^2= \sum \limits_{i=1}^{d} \min \limits_{z_i \in \{-1,0,1\}} |x_i+z_i-y_i |^2 \leq \sum \limits_{i=1}^{d} \frac{1}{4}$.} be more than $\frac{\sqrt{d}}{2}$. The following theorem presents an entropy upper bound in the case of a unit torus.

\begin{theorem}\label{thm:ent_rgg_upper_torus}
The entropy of an RGG $G_m$ on $\mathbb{T}^d$ is upper bounded as follows:
     \begin{align}
        H(G_m) \leq 
            dm\log m + o(m \log m)
    \end{align} 
for $0 < r < \frac{\sqrt{d}}{2}$.
\end{theorem}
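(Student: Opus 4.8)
The plan is to mirror the proof of the first case ($0 < r \le \sqrt{d}/2$) of Thm.~\ref{thm:ent_rgg_upper_euclidean}: bound the entropy by the log-cardinality of the graph ensemble, $H(G_m) \le \log_2 |\mathcal{G}_m|$, and then bound $|\mathcal{G}_m|$ by the number of sign patterns of a suitable family of low-degree polynomials via Warren's theorem (Thm.~\ref{thm:warren}). The one new ingredient needed for $\mathbb{T}^d$ is to rewrite the toroidal adjacency relation --- which is defined through a minimum over shifts --- as a Boolean combination of polynomial inequalities.

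Concretely, I would represent points of $\mathbb{T}^d$ by coordinates in $[0,1)^d \subset \mathbb{R}^d$, so that a configuration is a point in $\mathbb{R}^{n}$ with $n := dm$. For each pair $i < j$ and each shift $z \in \{-1,0,1\}^d$, set
$$f_{i,j,z}(x_1,\dots,x_m) := \lVert x_i + z - x_j\rVert^2 - r^2,$$
a polynomial of degree $2$ in the $n$ real coordinates. By the definition of $d_t$ (and the footnoted reduction of the minimization to $z \in \{-1,0,1\}^d$), we have $E_{i,j} = 1$ if and only if $f_{i,j,z} \le 0$ for at least one $z$. Hence $G_m$ is a deterministic function of the sign vector $\big(\operatorname{sgn} f_{i,j,z}\big)_{i<j,\, z}$, so $|\mathcal{G}_m|$ is at most the number of realizable sign conditions of the family of $N := 3^d\binom{m}{2}$ polynomials of degree at most $2$ in $n = dm$ variables.

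Applying Warren's theorem --- in the version that also accounts for the sign $0$, so that the non-strict inequalities $f_{i,j,z} \le 0$ are handled directly --- then gives $|\mathcal{G}_m| \le \big(c_d\, m^2/(dm)\big)^{dm} = (c_d'\, m)^{dm}$ for constants $c_d, c_d'$ depending only on $d$ (the hypothesis $N \ge n$ of the theorem holds for all large $m$ since $N = \Theta(m^2)$). Taking logarithms,
$$H(G_m) \le \log_2|\mathcal{G}_m| \le dm\log_2 m + dm\log_2 c_d' = dm\log m + O(m),$$
and for fixed $d$ the remainder is $o(m\log m)$, which is the asserted bound. Unlike the cube, no $\beta(r)$-type correction appears here: on $\mathbb{T}^d$ there is no distinguished region, and for $r < \sqrt{d}/2$ every pair of vertices is non-adjacent with positive probability, so no edges are forced and nothing is subtracted from the leading term.

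The main obstacle is conceptual: recognizing that passing from one quadratic constraint per pair (Euclidean case) to a disjunction of $3^d$ quadratic constraints (toroidal case) only multiplies the number of polynomials by a constant depending on $d$, so Warren's bound still yields the same leading order $dm\log m$. The only remaining care is the treatment of the measure-zero boundary configurations where some $f_{i,j,z}$ vanishes exactly; this is absorbed by using the ``sign conditions'' (rather than strict ``sign patterns'') form of the bound, so no separate perturbation argument is needed.
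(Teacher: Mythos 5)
Your proposal is correct and follows essentially the same route as the paper: the toroidal adjacency is encoded by the $3^d\binom{m}{2}$ degree-$2$ polynomials $\lVert x_i+z-x_j\rVert^2-r^2$, $z\in\{-1,0,1\}^d$, Warren's theorem with $t=dm$, $k=2$ gives $|\mathcal{G}_m|\leq (C m)^{dm}$, and hence $H(G_m)\leq \log|\mathcal{G}_m| = dm\log m + O(m)$. The only cosmetic difference is that you invoke a sign-conditions (zero-allowing) variant of Warren's theorem to handle boundary configurations, whereas the paper uses the strict sign-pattern version together with the observation that the map from sign patterns to graphs is surjective; both dispositions of this measure-zero issue are fine.
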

\begin{proof}
    See Section~\ref{proof:thm:ent_rgg_upper_torus}.
\end{proof}

Even on a unit torus, the leading order term of the upper bound is still $dm\log m$ for the entire connection range $0 < r < \frac{\sqrt{d}}{2}$. The basic proof idea of Thm.~\ref{thm:ent_rgg_upper_torus} is similar to that of Thm.~\ref{thm:ent_rgg_upper_euclidean}--- bounding the cardinality of the ensemble of RGGs using the Warren's theorem (Thm.~\ref{thm:warren}). However, we choose a different collection of polynomials based on the toroidal metric. 

A tight lower bound is presented in the next theorem, which holds for the connection range values $r\leq \frac{1}{4}$. This condition is a mere technical choice to simplify the analysis. However, we believe that the same result is valid over the whole interval  $0 < r < \frac{\sqrt{d}}{2}$. 
\begin{theorem}\label{thm:ent_rgg_lower_torus}
The entropy of an RGG $G_m$ on $\mathbb{T}^d$ is lower bounded as follows:
     \begin{align}
        H(G_m) \geq  dm\log m - o(m \log m)
    \end{align}   
for $r\leq \frac{1}{4}$. 
\end{theorem}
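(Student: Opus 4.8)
The plan is to establish a matching lower bound $H(G_m) \geq dm\log m - o(m\log m)$ by exhibiting a low-entropy quantity that is determined by $G_m$, so that $H(G_m) \geq H(f(G_m))$ for a suitable function $f$, and then showing $H(f(G_m)) = dm\log m - o(m\log m)$. The natural candidate for $f(G_m)$ is a coordinate-wise ordering (or near-ordering) of the $m$ random points on $\mathbb{T}^d$. Concretely, draw $X_1,\dots,X_m$ uniformly on $\mathbb{T}^d$, write $X_i = (X_i^{(1)},\dots,X_i^{(d)})$, and for each coordinate $k$ let $\pi_k$ be the permutation that sorts $\{X_i^{(k)}\}_{i=1}^m$. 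The tuple $(\pi_1,\dots,\pi_d)$ has entropy at most $d\log(m!) = dm\log m - o(m\log m)$ by Stirling, with equality up to lower order if the $\pi_k$ were a bona fide function of the graph; the crux is therefore to argue that $G_m$ determines $(\pi_1,\dots,\pi_d)$ up to an $o(m\log m)$ ambiguity, and that $(\pi_1,\dots,\pi_d)$ is asymptotically uniformly distributed over $(S_m)^d$ with enough entropy.

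The first main step is the combinatorial recovery claim: with high probability, the graph $G_m$ (together with knowledge of $m$, $d$, $r$) reveals each coordinate ordering $\pi_k$ up to a small correction. Here is where $r \leq 1/4$ is used. On $\mathbb{T}^d$ with $r \leq 1/4$, the condition $d_t(x,y) \leq r$ forces $\min_{z_k \in \{-1,0,1\}}|x^{(k)} + z_k - y^{(k)}| \leq r$ in every coordinate $k$; since $r \leq 1/4 < 1/2$, the neighbourhood structure in coordinate $k$ is an interval graph on the circle of circumference $1$ with window $r$. Because the points are dense (typical spacing $\Theta(1/m) \ll r$), the graph restricted to any coordinate behaves like a "unit interval graph on a cycle," and from the $d$-fold intersection of these structures one can read off, for most pairs $i,j$, which of $X_i^{(k)}, X_j^{(k)}$ is larger. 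The clean way to formalize this is: define $f(G_m)$ to be, for each coordinate $k$, the partial order on $[m]$ in which $i \prec_k j$ iff $i$ and $j$ have the same closed neighbourhood intersection pattern forcing $X_i^{(k)} < X_j^{(k)}$; show that this partial order refines to a total order for all but $o(m)$ of the vertices with high probability, so that $H(\pi_k \mid f(G_m)) = o(m\log m)$.

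The second main step is the entropy lower bound for the recovered object: show $H(f(G_m)) \geq d\log(m!) - o(m\log m)$. Since $f(G_m)$ essentially determines $(\pi_1,\dots,\pi_d)$ up to an $o(m\log m)$-entropy correction, it suffices to show $H(\pi_1,\dots,\pi_d) \geq d m\log m - o(m\log m)$, i.e., that the $d$ coordinate orderings are jointly "almost uniform and almost independent." By symmetry each $\pi_k$ is exactly uniform on $S_m$, so $H(\pi_k) = \log(m!) = m\log m - O(m)$. For the joint entropy, use $H(\pi_1,\dots,\pi_d) = \sum_k H(\pi_k \mid \pi_1,\dots,\pi_{k-1})$ and argue that conditioning on the orderings in the first $k-1$ coordinates barely constrains the $k$-th: because the coordinates $X_i^{(1)},\dots,X_i^{(d)}$ are mutually independent across $k$, the vector of ranks in coordinate $k$ is independent of the ranks in the other coordinates, hence $H(\pi_k \mid \pi_1,\dots,\pi_{k-1}) = H(\pi_k) = m\log m - O(m)$ exactly. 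This gives $H(\pi_1,\dots,\pi_d) = dm\log m - O(m)$, and combined with Step~1 via $H(G_m) \geq H(f(G_m)) \geq H(\pi_1,\dots,\pi_d) - H(\pi_1,\dots,\pi_d \mid f(G_m)) = dm\log m - o(m\log m)$ we conclude.

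The main obstacle I anticipate is Step~1 — making the "graph determines the coordinate orderings up to $o(m)$ vertices" claim rigorous and quantitative enough that the conditional entropy correction is genuinely $o(m\log m)$ rather than merely $o(m^2)$. The delicate points are: (i) on the circle one cannot globally orient the cyclic order, so one must either fix an anchor (e.g., condition on a vertex near a reference point, costing only $O(\log m)$ bits) or work with the induced cyclic order and count its automorphisms; (ii) pairs $i,j$ with $X_i^{(k)}$ and $X_j^{(k)}$ far apart in coordinate $k$ (distance $> r$) are non-adjacent for reasons unrelated to that coordinate, so their relative $k$-order is not directly visible and must be inferred by chaining through intermediate vertices, which requires a connectivity/covering argument for the coordinate-$k$ interval structure (this is where density $mr \to \infty$, guaranteed since $r$ is fixed, enters); and (iii) controlling the $o(m)$ "bad" vertices — those in short gaps or ambiguous configurations — via a first-moment bound on the number of pairs at coordinate-distance within, say, $1/m^{1-\epsilon}$. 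Once these are handled, the rest is Stirling's formula and standard entropy inequalities.
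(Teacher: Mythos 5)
Your Step 2 is fine: the $d$ coordinate rankings $\pi_1,\dots,\pi_d$ are exactly uniform and mutually independent, so $H(\pi_1,\dots,\pi_d)=d\log(m!)=dm\log m-O(m)$, and the data-processing chain $H(G_m)\ge H(f(G_m))\ge H(\pi_1,\dots,\pi_d)-H(\pi_1,\dots,\pi_d\mid f(G_m))$ is legitimate once $f$ is a genuine function of $G_m$. The problem is Step 1, and it is not a deferred technicality --- it is the entire content of the theorem. A concrete error first: for $d\ge 2$ the ``coordinate-$k$ interval graph'' you intend to sort with is not observable from $G_m$. Adjacency $d_t(X_i,X_j)\le r$ does imply proximity in every coordinate, but non-adjacency carries no information about \emph{which} coordinate failed; hence $G_m$ does not decompose into $d$ circular interval graphs, and the statement ``the graph restricted to any coordinate behaves like a unit interval graph on a cycle'' is false as a statement about a function of $G_m$. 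So the basic object you propose to read off per coordinate does not exist, and any recovery of $\pi_k$ must go through genuine multi-dimensional latent-position reconstruction.

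Second, the quantitative demand is severe and your sketch does not meet it. To get $H(\pi_k\mid f(G_m))=o(m\log m)$ you must localize the $k$-th coordinate of all but a vanishing fraction of vertices to accuracy $m^{-1+o(1)}$: if coordinates are only pinned down to accuracy $m^{-\alpha}$ with $\alpha<1$, the ordering remains ambiguous within blocks of roughly $m^{1-\alpha}$ vertices, leaving residual entropy $\Theta\!\left((1-\alpha)\,m\log m\right)$, which is not $o(m\log m)$. Elementary statistics such as common-neighbour counts concentrate only to resolution $\Theta(m^{-1/2})$, and the chaining of local comparisons you invoke must control accumulated error along chains at the $1/m$ scale while also handling the continuous translation (and reflection/coordinate-swap) ambiguity of the torus; none of this follows from a first-moment bound on close pairs. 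In effect, Step 1 is a reconstruction theorem at resolution $1/m$ that is at least as hard as the result being proved. The paper avoids reconstruction altogether: it writes $H(G_m)=-h(X_1,\dots,X_m\mid G_m)$ (or uses the increment $H(G_n)-H(G_{n-1})$) and shows that the region of $\mathbb{T}^d$ consistent with one vertex's adjacency pattern to the others has expected volume $O(m^{-d})$ --- via a Boolean-model/Crofton-cell result of Richey and Sarkar, or an elementary crescent-volume bound plus integration --- which directly yields $d\log m$ bits per vertex. As it stands, your proposal does not prove the theorem.
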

\begin{proof}
    See Section~\ref{proof:thm:ent_rgg_lower_torus_boolean} and Section~\ref{proof:thm:ent_rgg_lower_torus} for two different proofs.
\end{proof}

One proof relies on a result on
the Crofton cell properties \cite{Richey_Sarkar_2022} for a particular Boolean model whose focus is on the intersection of randomly placed spheres. Another proof involves bounding the volume of the crescents formed from the intersection of two spheres with random centers.

Now our main result can be stated by combining Theorem~\ref{thm:ent_rgg_upper_torus} and \ref{thm:ent_rgg_lower_torus}. The entropy scales exactly like $dm\log m$, which is linear in the dimension $d$ and independent of the connection range $r$.

\begin{theorem}\label{thm:asymp:torus}
    The asymptotic characterization of the entropy of an RGG $G_m$ on $\mathbb{T}^d$ is given by
    \begin{align}
        H(G_m) \sim dm \log m
    \end{align}
    for $r \leq  \frac{1}{4}$.
\end{theorem}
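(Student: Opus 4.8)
The statement is a direct corollary of the two one-sided estimates already established, so the plan is simply to combine them. By Theorem~\ref{thm:ent_rgg_upper_torus}, which applies for every $0<r<\tfrac{\sqrt d}{2}$ and in particular for $0<r\le\tfrac14$, we have $H(G_m)\le dm\log m+o(m\log m)$; by Theorem~\ref{thm:ent_rgg_lower_torus} we have $H(G_m)\ge dm\log m-o(m\log m)$ for $r\le\tfrac14$. Dividing by $dm\log m$ and letting $m\to\infty$ gives $H(G_m)/(dm\log m)\to1$, i.e.\ $H(G_m)\sim dm\log m$. The only things to check are that the two $o(m\log m)$ remainders are genuinely of smaller order than $m\log m$ (they are, by construction in the two proofs) and that the admissible ranges of $r$ overlap on $0<r\le\tfrac14$ (they do, since $\tfrac14<\tfrac{\sqrt d}{2}$ for all $d\ge1$). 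Thus essentially all of the content has been pushed into Theorems~\ref{thm:ent_rgg_upper_torus} and \ref{thm:ent_rgg_lower_torus}.

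Because the upper bound is the easier, counting direction (bound $|\mathcal G_m|$ via Warren's theorem and use $H(G_m)\le\log|\mathcal G_m|$), the real substance of Theorem~\ref{thm:asymp:torus} lives in the matching lower bound, and that is where I would focus. My approach to $H(G_m)\ge dm\log m-o(m\log m)$ is a chain-rule decomposition over neighbourhoods. Order the vertices and set $N_k=\{\,j<k:j\sim k\,\}$; since the family $\{E_{i,j}\}$ and the tuple $(N_2,\dots,N_m)$ determine one another, $H(G_m)=\sum_{k=2}^m H(N_k\mid N_2,\dots,N_{k-1})$. Knowing $N_2,\dots,N_{k-1}$ is the same as knowing the induced subgraph on $[k-1]$, which is a deterministic function of the locations $X_1,\dots,X_{k-1}$; hence, since conditioning cannot increase entropy,
\[
 H(G_m)\ \ge\ \sum_{k=2}^{m}H\!\left(N_k\mid X_1,\dots,X_{k-1}\right).
\]
Fixing $x_1,\dots,x_{k-1}$, the set $N_k$ is precisely the set of indices $j<k$ with $X_k\in B(x_j,r)$, so $\pr[N_k=S\mid X_{1:k-1}=x_{1:k-1}]$ equals the volume of the set of points lying in exactly the balls $\{B(x_j,r):j\in S\}$. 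Consequently
\[
 H\!\left(N_k\mid X_1,\dots,X_{k-1}\right)=\ex\!\left[-\log\vol\big(R(X_k)\big)\right],
\]
where $R(x)$ denotes the region of $\mathbb T^d$ consisting of all points that lie in exactly the same collection of radius-$r$ balls about $x_1,\dots,x_{k-1}$ as $x$ does, and the expectation is over $X_1,\dots,X_k$ jointly.

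The geometric heart of the argument is therefore to show that, for a typical placement of $k-1$ centres and a typical extra point, $\vol\big(R(X_k)\big)\le\mathrm{polylog}(k)/k^{d}$ with probability $1-o(1)$. The intuition is local: the expected number of the $k-1$ spheres passing within distance $t$ of a given point of $\mathbb T^d$ is of order $kr^{d-1}t$, so at scale $t\sim1/k$ the arrangement near a typical point resembles $\Theta(1)$ hyperplane pieces, the cell through that point has diameter $\Theta(1/k)$, and hence volume $\Theta(k^{-d})$; one must additionally rule out the rare coincidence that two far-apart cells of the arrangement are covered by exactly the same set of balls. Crucially, on the torus this local picture is translation-invariant, which is what makes the estimate tractable, and the restriction $r\le\tfrac14$ keeps balls and their pairwise intersections in a clean regime. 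I would import the required concentration from the Crofton-cell analysis of the associated Boolean model of intersecting spheres \cite{Richey_Sarkar_2022} (this is the ``Boolean-model'' proof), or else establish a weaker but still sufficient bound directly by estimating the volumes of the crescents $B(x_i,r)\setminus B(x_j,r)$ cut out by pairs of centres (the ``crescent'' proof).

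Granting such a bound, $H(N_k\mid X_1,\dots,X_{k-1})\ge d\log k-o(\log k)$ for all large $k$: on the good event one gets $-\log\vol(R(X_k))\ge d\log k-O(\log\log k)$, and the $1-o(1)$ probability of that event costs only a further $o(\log k)$. Dropping the terms with $k$ below any slowly growing threshold loses only $o(m\log m)$ in aggregate, and
\[
 \sum_{k=2}^{m}\big(d\log k-o(\log k)\big)=d\log(m!)-o(m\log m)\sim dm\log m,
\]
which completes the lower bound and hence the theorem. The one genuine obstacle in this plan is the uniform-in-$k$, polylogarithmic control of the ``same-neighbourhood region'' volume $\vol\big(R(X_k)\big)$, including the handling of disconnected regions sharing a neighbour label; everything else is assembly and bookkeeping.
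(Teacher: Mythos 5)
Your proposal is correct and matches the paper: Theorem~\ref{thm:asymp:torus} is obtained there exactly as you say, by combining the upper bound of Theorem~\ref{thm:ent_rgg_upper_torus} with the lower bound of Theorem~\ref{thm:ent_rgg_lower_torus}, both of which hold for $r \le \frac{1}{4} < \frac{\sqrt{d}}{2}$. Your supplementary sketch of the lower bound also follows the paper's own route (chain rule over vertices, conditioning on the earlier locations, and bounding the volume of the region of locations consistent with the new vertex's neighbourhood via the Richey--Sarkar Boolean-model result or the crescent estimate), differing only cosmetically in that you would use a high-probability cell-volume bound together with the nonnegativity of $-\log \vol$, where the paper applies Jensen's inequality directly to the expected volume.
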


\subsection{Structural Entropy of an RGG}
The entropy characterization result (Thm.~\ref{thm:asymp:torus}) can be used to deduce how small the structural entropy of an RGG can be. A structure \cite{choi_structural_entropy} of a graph is its unlabelled version. With applications to compression, the structural entropy of various random graphs was studied \cite{choi_structural_entropy,luczak_pag_asym,mihai2021structural}. In the work \cite{mihai2021structural} on one-dimensional RGGs, it was established that the structural entropy $H(S_m)$, where $S_m$ is the structure of an RGG $G_m$, satisfies the relation $\log e (1-r) m \leq H(S_m) \leq 2(1-r) m $ for $0 < r < 1$. By combining this with the result of Thm.~\ref{thm:asymp:cube}, we can conclude that for a one-dimensional RGG, the randomness of the graph is dominated by the randomness in the labels. To see this, consider the expansion of the entropy of a general RGG in terms of the structural entropy:
\begin{align}\label{eq:decomp}
    H(G_m) = H(G_m, S_m) &= H(S_m) + H(G_m|S_m)
\end{align}
which follows from the fact that $S_m$ is completely determined by $G_m$. Notice that when conditioned on a structure $S_m$, the randomness of the graph $G_m$ lies in its labeling. Hence, $H(G_m|S_m)$ represent the average conditional randomness in the labeling. It follows from \eqref{eq:decomp} that the entropy of this labeling must  be $\Theta(m\log m)$ for a one-dimensional RGG as $H(G_m)= \Theta(m\log m)$ and $H(S_m)=\Theta(m)$, dominating the entropy of the graph. However, this is not the case when $d>1$. Here, the structural entropy dominates as shown in the following result.
\begin{corollary}
    \label{cor:asymp:struct:torus}
    The structural entropy of an RGG $G_m$ on $\mathbb{T}^d$ is given by
    \begin{align}
        H(S_m) = \Omega ((d-1)m \log m)
    \end{align}
    for $r \leq  \frac{1}{4}$.
\end{corollary}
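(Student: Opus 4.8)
The plan is to use the chain-rule identity \eqref{eq:decomp}, $H(G_m) = H(S_m) + H(G_m \mid S_m)$, which converts the desired bound on $H(S_m)$ into a lower bound on $H(G_m)$ together with an upper bound on the conditional entropy $H(G_m \mid S_m)$. The first ingredient is already available: Theorem~\ref{thm:ent_rgg_lower_torus} gives $H(G_m) \geq dm\log m - o(m\log m)$ for $r \leq \tfrac14$, so the only work is to control $H(G_m \mid S_m)$ from above.

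For that, I would observe that conditioned on $\{S_m = s\}$ the labelled graph $G_m$ takes values in the set of graphs on $[m]$ isomorphic to $s$, a set of size $m!/|\mathrm{Aut}(s)| \leq m!$. Since the entropy of any random variable is at most the logarithm of the cardinality of its support, $H(G_m \mid S_m = s) \leq \log m!$ uniformly in $s$, and averaging over the law of $S_m$ yields $H(G_m \mid S_m) \leq \log m!$. By Stirling's approximation $\log m! = m\log m - m\log e + O(\log m) = m\log m - o(m\log m)$. Substituting both estimates into \eqref{eq:decomp} gives
$$H(S_m) = H(G_m) - H(G_m \mid S_m) \geq \bigl(dm\log m - o(m\log m)\bigr) - m\log m = (d-1)m\log m - o(m\log m),$$
which is the claimed $H(S_m) = \Omega((d-1)m\log m)$.

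I do not expect a genuine obstacle here; the corollary is a short deduction from the lower-bound half of Theorem~\ref{thm:asymp:torus} (namely Theorem~\ref{thm:ent_rgg_lower_torus}) and the crude counting bound on labellings. The only point deserving a careful sentence is that the support bound $H(G_m \mid S_m = s) \leq \log(m!/|\mathrm{Aut}(s)|) \leq \log m!$ must hold for \emph{every} structure $s$ before averaging, which is immediate since $|\mathrm{Aut}(s)| \geq 1$. In fact, by exchangeability of the $m$ i.i.d.\ points, $G_m$ is uniform on the labellings of its structure, so $H(G_m \mid S_m) = \log m! - \ex[\log |\mathrm{Aut}(S_m)|]$ exactly; this also explains why only an $\Omega$-bound is obtained rather than a $\Theta$: the trivial upper bound $H(S_m) \leq H(G_m) \sim dm\log m$ still differs from our lower bound by the factor $d/(d-1)$, and closing the gap would require showing $\ex[\log|\mathrm{Aut}(S_m)|] = o(m\log m)$, which this argument does not address.
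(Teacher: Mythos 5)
Your proposal is correct and follows essentially the same route as the paper: use the decomposition \eqref{eq:decomp}, bound $H(G_m \mid S_m) \leq \log m!$ by counting labellings, and combine Stirling's approximation with the lower bound $H(G_m) \geq dm\log m - o(m\log m)$ from Theorem~\ref{thm:ent_rgg_lower_torus}. The extra remarks about $|\mathrm{Aut}(S_m)|$ are a nice refinement but do not change the argument.
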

\begin{proof}
    
Since  $H(G_m|S_m)$  is upper bounded by the log of cardinality of the all possible labelings, which is $m!$, we have from \eqref{eq:decomp} 
$H(S_m)\geq H(G_m)-H(G_m|S_m) \geq H(G_m) -\log(m!)$.
The result immediately follows from the Stirling's approximation and Thm.~\ref{thm:asymp:torus}.

\end{proof}

\section{Proofs}\label{sec:proofs}
\subsection{Proof of Theorem~\ref{thm:ent_rgg_upper_euclidean}: Upper bound in the case of $[0,1]^d$}
\label{proof:thm:ent_rgg_upper_euclidean}
The first bound in the theorem  follows from the work of McDiarmid and Müller \cite{mcdiarmid_diskgraph_2014} on disk-intersection graphs, where Warren's theorem \cite{warren1968} (stated below for convenience) was used to find an upper bound on the cardinality of the set of possible graphs\footnote{Note that not every graph is a geometric graph; for example, it is impossible to realize the complete bipartite graph $K_{1,6}$ (a star graph)  on  $[0,1]^2$ \cite[Ch. 3]{PenroseBook}. Therefore, the number of possible geometric graphs is strictly less than $2^{\binom{m}{2}}$, the total number of possible graphs with $m$ vertices.}. For the sake of completeness, we will present that argument applied to random geometric graphs.

Recall that $\mathcal{G}_m$ denotes the set of all hard RGGs with a fixed connection radius $r$. In a hard RGG, two vertices $u,v \in V$ are connected by an edge iff $\lVert x_u -x_v \rVert \leq r$. Now we consider the degree-$2$ polynomials of the form $\lVert X_u -X_v\rVert^2 - r^2$ for each pair $u, v \in V$. For a fixed pair of vertices $u$ and $v$, if the sign of the polynomial  $\lVert X_u -X_v\rVert^2 - r^2$ evaluated at $X_u=x_u$ and $X_v=x_v$ is negative, then there is an edge in the graph $G_m$ between $u$ and $v$; otherwise, there is no edge.

By arranging the signs of the evaluations of all these polynomials (in a fixed order) at a realization of the node locations $X_1=x_1, \dots X_m=x_m$, we get a sign pattern. Let $\mathcal{S}_m \subseteq \left\{+, -\right\}^{\binom{m}{2}}$ be the collection of all such sign patterns. In other words,  $\mathcal{S}_m$ is the set containing all the sign patterns corresponding to the points $(x_1, \ldots, x_m) \in [0,1]^{dm}$. As each sign pattern is uniquely identified with an RGG, and vice versa, the map from $\mathcal{S}_m$ to $\mathcal{G}_m$ is bijective. Therefore, $|\mathcal{G}_m| = |\mathcal{S}_m|$. We can now bound $|\mathcal{S}_m|$ using the following theorem.

\begin{theorem}[Warren \cite{warren1968,mcdiarmid_diskgraph_2014}]\label{thm:warren}
    For polynomials $Q_1, Q_2, \ldots, Q_u$ with at most degree $k$ in real variables $z_1, \ldots, z_t$, the number of the sign patterns $(\operatorname{sign}(Q_1), \ldots, \operatorname{sign}(Q_u)) \in \{-,+\}^u$ evaluated over $\mathbb{R}^t \setminus \cup_{i=1}^u\{(z_1, \ldots, z_t): Q_i(z_1, \ldots, z_t)=0\}$ is bounded above by $\left(\frac{4eku}{t}\right)^t$.    
\end{theorem}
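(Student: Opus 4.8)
The plan is to prove Warren's bound (Thm.~\ref{thm:warren}) by reducing the count of sign patterns to a count of connected components of a semialgebraic set, and then estimating the latter by a Milnor--Thom / Morse-theoretic argument. First I would form the single product polynomial $P := \prod_{i=1}^{u} Q_i$, of degree $D \le ku$ in $z_1,\ldots,z_t$. On the open set $\mathbb{R}^t\setminus\{P=0\}=\mathbb{R}^t\setminus\bigcup_{i}\{Q_i=0\}$ each $Q_i$ is continuous and nonvanishing, so the sign vector $(\operatorname{sign}Q_1,\ldots,\operatorname{sign}Q_u)$ is locally constant, hence constant on each connected component. Distinct realizable sign patterns thus occupy disjoint unions of components, and the number of sign patterns is at most the number $c$ of connected components of $\{P\neq 0\}$. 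It remains to bound $c$.

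To see the shape of the sharp bound, I would first isolate the linear case $k=1$, where $\{P\neq0\}$ is the complement of an arrangement of $u$ hyperplanes and the number of regions is exactly $\sum_{j=0}^{t}\binom{u}{j}\le\bigl(\tfrac{eu}{t}\bigr)^t$ (up to a constant factor) by the Schl\"afli--Buck formula. This is where the factor $\binom{u}{t}$, and in particular the $1/t^{t}$ saving, originates, and it is exactly this saving that makes the bound much stronger than the naive $(2ku)^t$ one would get by applying Milnor--Thom directly to the single hypersurface $\{P=0\}$; since the application in this paper has $t=dm$ growing, this refinement is essential.

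For general degree $k$ I would keep the exponent equal to $t$ by a critical-point count. For each bounded component of $\{P\neq 0\}$, $P$ attains an interior extremum solving the system $\nabla P=0$, a system of $t$ polynomial equations whose B\'ezout number is of order $(ck)^t$; unbounded components are handled by intersecting with a large ball and running a Morse argument with a generic distance function, and a general-position perturbation of $P$ ensures all critical points are nondegenerate without raising degrees. Organizing this count by the $t$-subsets of active polynomials among $Q_1,\ldots,Q_u$ reinstates the combinatorial factor $\binom{u}{t}$ alongside the per-subset B\'ezout factor $(2k)^t$; simplifying $\binom{u}{t}(2k)^t\le\bigl(\tfrac{eu}{t}\bigr)^t(2k)^t$ and absorbing the remaining constants then yields $\bigl(\tfrac{4eku}{t}\bigr)^t$.

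The main obstacle is the core estimate in the general-degree case: combining the Morse-theoretic degree control -- which keeps the exponent at $t$ rather than the ambient $\binom{t+k}{k}$ one would get from a naive Veronese linearization -- with the combinatorial $t$-subset grouping that supplies the $1/t^t$ factor, while simultaneously (i) guaranteeing that every component, including unbounded ones, is witnessed by a genuine nondegenerate critical point after compactification and perturbation, and (ii) tracking the constants so that precisely $4e$ and $1/t^t$ emerge rather than a looser bound.
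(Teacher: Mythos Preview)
The paper does not prove Theorem~\ref{thm:warren}; it is stated as a cited result from Warren~\cite{warren1968} (in the form recorded in~\cite{mcdiarmid_diskgraph_2014}) and used as a black box in the proofs of Theorems~\ref{thm:ent_rgg_upper_euclidean} and~\ref{thm:ent_rgg_upper_torus}. There is therefore no proof in the paper to compare your proposal against.

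That said, your outline follows the classical route to Warren's bound: pass from sign patterns to connected components of $\{\prod_i Q_i\neq 0\}$, then control the component count by a Milnor--Thom/Morse-type argument while retaining the combinatorial $\binom{u}{t}$ factor that yields the crucial $1/t^{t}$ saving. This is indeed the architecture of Warren's original proof. The one place where your sketch is loose is the ``organizing by $t$-subsets'' step in the general-degree case: Warren does not literally partition critical points by active $t$-subsets of the $Q_i$; rather, he proves an inductive inequality in the dimension $t$ that produces the sum $\sum_{j=0}^{t}\binom{u}{j}$ directly, with the per-term degree factor coming from an Oleinik--Petrovsky/Milnor bound on the Betti numbers of a single algebraic set. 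Your heuristic captures the right order of magnitude, but if you want a complete proof you should either reproduce Warren's induction or cite it.
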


By applying Theorem~\ref{thm:warren} with $u=\binom{m}{2}$, $t=dm$ and $k=2$ for the above degree-$2$ polynomials, we get
 \begin{align*}
     \lvert\mathcal{G}_m\rvert=\lvert\mathcal{S}_m\rvert \leq \left(\frac{4e\cdot 2 \cdot \binom{m}{2}}{md}\right)^{dm} \leq m^{dm} \cdot \left(\frac{4e}{d}\right)^{dm},
 \end{align*}
 which is indeed independent of the connection radius $r$. This implies that 
 \begin{align}
       H(G_m) \leq \log \lvert \mathcal{G}_m \rvert \leq dm\log_2 m + dm \log_2 C, \label{ineq:card_bound_rgg_cube}
\end{align}
where  $C= \left(\frac{4e}{d}\right)^{d}$. This bound is true for the whole range $0 < r < \sqrt{d}$, proving the first part of the theorem. However, we can improve this bound if $\frac{\sqrt{d}}{2} \leq r < \sqrt{d}$, which is presented below.

Let us denote the ball $B\left((1/2,1/2,\ldots,1/2); r-\frac{\sqrt{d}}{2}\right) \cap [0,1]^d$ around the center of $[0,1]^d$ of radius $r-\sqrt{d}/2$ by $B\left(r-\frac{\sqrt{d}}{2}\right)$ and its  volume $\vol\left(B\left(r-\frac{\sqrt{d}}{2}\right)\right)$ by $\beta(r)$. Define $S\triangleq\left\{i \in [m]: X_i \in B\left(r-\sqrt{d}/2\right)\right\}$. Note that for any $i \in S$, $\lVert X_v -X_i \rVert \leq r$ for all $v \in [m]\setminus\{i\}$ because every point $y \in [0,1]^d$ is within a distance\footnote{By triangle inequality, $\lVert y - x\rVert \leq \lVert y - (1/2,1/2,\ldots,1/2)\rVert + \lVert (1/2,1/2,\ldots,1/2) - x\rVert \leq \frac{\sqrt{d}}{2} + \left(r-\frac{\sqrt{d}}{2}\right) = r$} $r$ from every $x \in B\left(r-\sqrt{d}/2\right)$. Therefore, any edge random variable corresponding to a vertex $i \in S$ is $1$. We can think\footnote{We use the notation $G[U]$ to denote the vertex-induced subgraph of $G_m:= G[V]$, where $V=[m]$, containing only the vertices in $U \subseteq V$.} of $G[S]$, the graph with only vertices in $S$ as the \emph{core} of the whole graph $G[V]=G_m$. The graph $G[S]$ is complete with significant number of vertices in it, and it is connected to all the other vertices within $G[V]$. If we know the set $S$, the only uncertainty in the graph $G[V]$ is due to $G[V \setminus S]$. Using this fact we can bound the entropy:
\begin{align}
    H(&G_m)= H(G[V])\leq H(G[V], S) \nonumber \\
    & = H(S)+ H(G[V] \mid S) \nonumber \\
    & = H(S) + H(G[V\setminus S]|S)\label{eq:up_ent_1}\\
    & \leq \log 2^m + H(G[V\setminus S]|S)\label{eq:up_ent_2}\\
    & = m + \mathbb{E}_S\left[H(G[V\setminus S]|S)\right]\nonumber\\
    & \leq m + \mathbb{E}_S\left[d |V\setminus S| \log |V\setminus S| + |V\setminus S|\log C\right]\label{eq:up_ent_3}\\
    &\leq m +  d \cdot \mathbb{E}_S\left[|V\setminus S|\right] \log m + \mathbb{E}_S\left[|V\setminus S|\right]\log C \label{eq:up_ent_4}\\
    & = m + (1-\beta(r))m \log m + (1-\beta(r))m \log C,\label{eq:up_ent_5}
\end{align}
where \eqref{eq:up_ent_1} relies on the fact that all the edge random variables in the graph $G[V]$ except for those in $G[V\setminus S]$ are $1$; as the cardinality of all possible sets $S$ is $2^m$, \eqref{eq:up_ent_2} follows; in \eqref{eq:up_ent_3} and \eqref{eq:up_ent_4}, we use the upper bound on the entropy of a random geometric graph in terms of the cardinality of the number graphs \eqref{ineq:card_bound_rgg_cube} with $|V\setminus S|$ vertices, i.e., \begin{align*}
H(G[V\setminus S]|S=s)&\leq \log \lvert \mathcal{G}_{|V\setminus s|}\rvert\nonumber \\ & \leq d |V\setminus s| \log |V\setminus s| + |V\setminus s|\log C\\& \leq  d |V\setminus s| \log m + |V\setminus s|\log C ;
\end{align*}
in \eqref{eq:up_ent_5} is due to the fact that $\mathbb{E}_S\left[|V\setminus S|\right]= m- \mathbb{E}_S\left[|S|\right]= m - m \beta(r)=m(1-\beta(r))$.
This shows that 

\begin{align*}
    H(G_m)\leq  [1-\beta(r)]m \log m + [1+(1-\beta(r))\log C]m
\end{align*}
for $\frac{\sqrt{d}}{2} \leq r < \sqrt{d}$, completing the proof of the theorem.

\subsection{Proof of Theorem~\ref{thm:ent_rgg_upper_torus}: Upper bound in the case of $\mathbb{T}^d$}\label{proof:thm:ent_rgg_upper_torus}
In a random geometric graph on $\mathbb{T}^d$, two vertices $u$ and $v$ are not connected by an edge iff the underlying node locations $x_u$ and $x_v$ satisfy $d_t(x_u, x_v) > r$. Note that the minimization in the toroidal metric $d_t(x,y):=\min\{\lVert x+z-y \rVert: z \in \mathbb{Z}^d\}$ is equivalent to the minimization over $z \in \left\{-1, 0, 1\right\}^d$. Therefore, the necessary and sufficient condition for the absence of an edge becomes $\lVert x_u + z -x_v\rVert > r$ for all $z \in \left\{-1, 0, 1\right\}^d$. 

Now we consider the degree-$2$ polynomials of the form $\lVert X_u + z -X_v\rVert^2 - r^2$ for each $z \in \{-1, 0, 1\}^d$ and each pair $u, v \in V$. For a fixed pair of vertices $u$ and $v$, if the signs of the polynomials  $\lVert X_u + z -X_v\rVert^2 - r^2$,  $z \in \{-1, 0, 1\}^d$, evaluated at $X_u=x_u$ and $X_v=x_v$ are all positive, then there is no edge in the graph $G_m$ between $u$ and $v$. On the other hand, there will be an edge if there is at least one polynomial evaluation with a negative sign.

By arranging the signs of the evaluations of all these polynomials (in a fixed order) at a realization of the node locations $X_1=x_1, \dots X_m=x_m$, we get a sign pattern. Let $\mathcal{S}_m \subseteq \left\{+, -\right\}^{3^d\binom{m}{2}}$ be the collection of all such sign patterns. In other words,  $\mathcal{S}_m$ is the set containing all the sign patterns corresponding to the points $(x_1, \ldots, x_m) \in [0,1]^{dm}$. Note that each sign pattern in $\mathcal{S}_m$ correspond to a random geometric graph. On the other hand, for each random geometric graph, there is at least one sign pattern in $\mathcal{S}_m$ corresponding to it. This is due to the fact that different realizations of node locations would produce a single random geometric graph, and the sign patterns associated with these realizations could be different because the necessary and sufficient condition for an edge to be present is to have at least one negative sign corresponding to a $z \in \{-1, 0, 1\}^d$ in the relevant part of the sign pattern. In other words, the map from $\mathcal{S}_m$ to $\mathcal{G}_m$ is surjective, i.e., $|\mathcal{G}_m|\leq |\mathcal{S}_m|$. We can now bound $|\mathcal{S}_m|$ by applying Warren's theorem (Thm.~\ref{thm:warren}) with $u=3^d \cdot \binom{m}{2}$, $t=dm$ and $k=2$  to the above degree-$2$ polynomials:
\begin{align}\label{ineq:card_bound_rgg}
     \lvert\mathcal{G}_m\rvert \leq \lvert\mathcal{S}_m\rvert \leq \left(\frac{4e\cdot 2 \cdot 3^d \cdot \binom{m}{2}}{md}\right)^{dm}  \leq m^{dm} \cdot C^{dm},
 \end{align}
 where $C=\frac{4e\cdot 3^d}{d}$.  
% To see this, presence or absence of an edge between $u$ and $v$ in a disk-intersection graph can be indicated through the sign of the degree-$2$ polynomial $\lVert x_u -x_v \rVert^2 - (2r)^2$. By treating the $n$ points $x_1, \ldots, x_m \in [0,1]^d$ as variables and applying Warren's theorem (Theorem~\ref{thm:warren}), we get a bound on the number of possible disk-intersection graphs:
%  \begin{align*}
%      \lvert\mathcal{U}_m\rvert \leq \left(\frac{4e\cdot 2 \cdot m^2}{md}\right)^{dm} =m^{dm} \cdot \left(\frac{8e}{d}\right)^{dm}.
%  \end{align*}
This implies that 
 \begin{align}
       H(G_m) \leq \log \lvert \mathcal{G}_m \rvert \leq dm\log_2 m + dm \log_2 C.
\end{align}
This bound is true for $0 < r < \frac{\sqrt{d}}{2}$, proving the theorem. 

\subsection{Proof of Theorem~\ref{thm:ent_rgg_lower_torus}: Lower Bound in the case of $\mathbb{T}^d$} \label{proof:thm:ent_rgg_lower_torus_boolean} 

We begin with the series of equalities involving the differential entropies\footnote{The differential entropy of an absolutely continuous random variable $X$ with probability density function $f_X(x)$ is defined as $h(X):= -\int f_x(x) \log f_X(x) dx.$}:
\begin{align}
    H(G_m)  &= I(G_m;X_1, \ldots, X_m) \label{eq:diff_ent_1}\\
            &= h(X_1, \ldots, X_m) - h(X_1, \ldots, X_m \mid G_m) \nonumber\\
            &= - h(X_1, \ldots, X_m \mid G_m) \label{eq:diff_ent_2}\\
            &= - \sum_{k=1}^m h(X_k \mid X_1, \ldots, X_{k-1}, G_m),\nonumber
\end{align}
where \eqref{eq:diff_ent_1} results from the fact that $H(G_m \mid X_1, \ldots, X_m) = 0$, and $\eqref{eq:diff_ent_2}$ follows because the $X_i$'s are uniformly distributed and $\vol([0,1]^d) = 1$, therefore $h(X_1, \ldots, X_m)=0$.  We now obtain a bound by ignoring all connections in $G_m$ that do not relate directly to node $k$:
\begin{align}
    H(G_m) &\geq - \sum_{k=1}^m h(X_k \mid X_1, \ldots, X_{k-1}, E_{1k}, E_{2k}, \ldots, E_{k-1,k})\nonumber\\
    & = - \sum_{k=1}^m h(X_k \mid X_1, \ldots, X_{k-1}, L_k)\label{eq: ent1}
\end{align}
where the first inequality is a consequence of the property that conditioning reduces entropy, and $L_k\triangleq\{i \in [k-1]: E_{ik}=1\}$, which is a random subset of $\{1, \ldots, k-1\}$. Furthermore, for fixed $X^{k-1} = x^{k-1}$ and $L_k = \ell_k$, the differential entropy is just the logarithm
% \footnote{Recall that all logarithms are base two unless explicitly stated otherwise.} 
of the volume of the region $\mathcal{R}(X^{k-1},L_k)$ in $\mathbb{T}^d$ carved out by the intersection of the balls centered at $x_i$, $i \in L_k$, excluding the portion covered by the balls centered at $x_i$, $i \in L_k^c$.  From this reasoning, we have
\begin{align*}
    h(X_k \mid X_1, \ldots, X_{k-1}, L_k) 
        &= \ex \left[\log \left[\vol \left( \mathcal{R}(X^{k-1},L_k) \right)\right]\right] \\
        &\leq \log \ex \left[\vol \left( \mathcal{R}(X^{k-1},L_k) \right)\right],
\end{align*}
where the second line follows from Jensen's inequality, and
\[
    \mathcal{R}(X^{k-1},L_k) = \bigg[\bigcap_{i \in L_k} \mathcal{B}_{X_i}(r)\bigg]\bigcap\bigg[\bigcap_{i \in L_k^c} \mathcal{B}^c_{X_i}(r)\bigg].
\]
Let $\mathcal{I}(X^{k-1},L_k) = \bigcap \limits_{i \in L_k} \mathcal{B}_{X_i}(r)$ and take $\mathcal{I}(X^{k-1},\emptyset) = \mathbb{T}^d$ by definition.  Note that $$\vol \left( \mathcal{R}(X^{k-1},L_k) \right) \leq \vol \left( \mathcal{I}(X^{k-1},L_k) \right)$$ for $L_k \subseteq [k-1]$, which immediately gives 
\begin{align}
    h(X_k \mid X_1, \ldots, X_{k-1}, L_k) &\leq \log \ex \left[\vol \left( \mathcal{I}(X^{k-1},L_k) \right)\right].\label{eq:ent2}
\end{align}
Let $L$ denote the cardinality of $L_k$, i.e., the number of one in $E_{1k},\ldots,E_{k-1,k}$. It follows from stationarity and symmetry, together with the fact that the volume measure is translationally invariant in a torus, that $E_{1k},\ldots,E_{k-1,k}$ are independent and identically distributed Bernoulli random variables with parameter $q$, where $q = \Pr(X \sim Y)$ for any two points $X$ and $Y$ distributed on $\mathbb{T}^d$. As a consequence, $L$ has a binomial distribution with parameters $k-1$ and $q$.

Define $\mathcal{I}(X^{\ell}) \triangleq \bigcap_{i=1}^{\ell} \mathcal{B}_{X_i}(r)$ for $\ell \geq 1$ and $\mathcal{I}(X^{0})\triangleq \mathbb{T}^d$. The variable pairs $(X_i, E_{ik})$, $1 \leq i \leq k-1$, are exchangeable. Hence, the conditional expectation of the volume of $\mathcal{I}(X^{k-1},L_k)$ with respect to $L_k$ depends only on the cardinality of $L_k$, which is $L$:
\[\ex \left[\vol \left( \mathcal{I}(X^{k-1},L_k) \right)\mid L_k\right]=\ex \left[\vol \left( \mathcal{I}(X^{L}) \right)\mid L\right].\]
Now, by taking the expectation on both sides, we obtain
\begin{align*}
    \ex \left[\vol \left( \mathcal{I}(X^{k-1},L_k) \right)\right]
     &=\ex\left[ \ex \left[\vol \left( \mathcal{I}(X^{k-1},L_k) \right)\mid L_k\right] \right] \nonumber \\
        &=\ex\left[\ex \left[\vol \left( \mathcal{I}(X^{L}) \right)\mid L \right]\right] \nonumber\\
        &=\ex[v(L)], \label{eq:v1}
\end{align*}
where $v(\ell) = \ex \left[\vol \left( \mathcal{I}(X^{L}) \right)\mid L=\ell\right]$ is the average volume of the intersection region formed by $\ell$ balls of radius $r$ conditioned on the event that $X_k$ lies in that region. By definition, $v(\ell)=1$ for $\ell=0$.

Conditioned on the event $L= \ell$, where $1 \leq \ell \leq k-1$ and the location $X_k$, the random variables $X^L=(X_1, \ldots, X_L)$ are independently and uniformly distributed on the ball $\mathcal{B}_{X_k}(r)$ centered at $X_k$. Because of the translation invariance of the volume measure on torus, the average volume $v(\ell)$ is nothing but the average volume of $\mathcal{I}(X^{L})$ when the random variables in $X^L$ are independently and uniformly distributed on a ball of radius $r$ around a fixed point of $\mathbb{T}^d$:
\begin{align*}
    v(\ell) &= \ex \left[\vol \left( \mathcal{I}(X^{L}) \right)\mid L=\ell\right]\nonumber \\
    &=\ex\left[\ex \left[\vol \left( \mathcal{I}(X^{L}) \right)\mid L=\ell, X_k\right] \right]\\
    & = \ex \left[\vol \big( \mathcal{I}(\tilde{X}^{\ell}) \big)\right],
\end{align*}
where $\tilde{X}^{\ell} = (\tilde{X}_1, \ldots, \tilde{X}_{\ell})$ are independently and uniformly distributed points in $\mathcal{B}_{c}(r)$ with $c=(1/2, \ldots, 1/2) \in \mathbb{T}^d$.

The asymptotic behavior of the average volume $v(\ell)$ follows from a result of Richey and Sarkar~\cite[Proposition~1.1]{Richey_Sarkar_2022} that says that for a  Poisson point process $\tilde{X}^{\lambda}$ with intensity parameter $\lambda$ in a ball of radius $r$ (say, centered at $c$) in $\mathbb{R}^d$,
\begin{align}
     \ex \left[\vol \big( \mathcal{I}^{\lambda} \big)\right] \sim C_{d,r}\cdot \lambda^{-d},
\end{align}
where $C_{d,r}$ is a constant that depends only on the dimension $d$ and radius $r$, and $\mathcal{I}^{\lambda}=\mathcal{B}_c(r) \bigcap\bigg[\bigcap \limits_{X \in \tilde{X}^{\lambda}} \mathcal{B}_{X}(r)\bigg]$. The authors remark that the same is the scaling limit even in the fixed count version with $\ell$ points, i.e., $\ex \left[\vol \big( \mathcal{I}(\tilde{X}^{\ell}) \big)\right] \sim C \cdot \ell^{-d}$ for a constant $C$. In fact, it can be verified by applying the standard de-poissonization argument for the expectation of monotonically decreasing functions\footnote{ It is evident that $\ex \left[\vol \big( \mathcal{I}(\tilde{X}^{\ell}) \big)\right]$  is a non-increasing function of the integer $l$, i.e., $\ex \left[\vol \big( \mathcal{I}(\tilde{X}^{\ell+1}) \big)\right] \leq \ex \left[\vol \big( \mathcal{I}(\tilde{X}^{\ell}) \big)\right]$ for $\ell \geq 1$ , as the average volume of the intersection decreases when we add a new random point $X_{\ell+1}$ into ${X_1, \ldots, X_{\ell}}$. 
}\cite[Theorem 5.10]{Mitzenmacher_Upfal_2005} that $\ex \left[\vol \big( \mathcal{I}(\tilde{X}^{\ell}) \big)\right] \leq 2\cdot\ex \left[\vol \big( \mathcal{I}^{\ell} \big)\right]$, where  we choose the intensity parameter to be $\ell$. Therefore, we have $\ex \left[\vol \big( \mathcal{I}(\tilde{X}^{\ell}) \big)\right] = O( \ell^{-d})$. When $r \leq \frac{1}{4}$, the volume of the intersection $\mathcal{I}(\tilde{X}^{\ell})$ in a torus is the same\footnote{When $r \leq \frac{1}{4}$ and all the balls contain $c=(1/2, \ldots, 1/2)$, they are strictly contained in $[0,1]^d$ without wrapped around the boundaries. So, the region covered by the intersection is exactly the same as that in $\mathbb{R}^d$.} as that in $\mathbb{R}^d$. Therefore, we have
\begin{align}
v(\ell) = O( \ell^{-d}). \label{eq:fixed volume}
\end{align}

\begin{lemma}\label{lem:bin}
    Let $v(\ell)$ denote the volume of an intersection of $\ell$ balls independently distributed on $\mathcal{B}_{c}(r)$ in a torus, where $c=(1/2, \ldots, 1/2)$ and $r \leq 1/4$, and let $L\sim\mathsf{Bin}(k-1,p)$.  Then,
    \[
        \ex[v(L)] = O(k^{-d}).
    \]
\end{lemma}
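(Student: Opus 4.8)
The plan is to split the expectation $\ex[v(L)] = \sum_{\ell=0}^{k-1}\binom{k-1}{\ell}p^\ell(1-p)^{k-1-\ell}v(\ell)$ into a region where $L$ is close to its mean $p(k-1)$ and a region where $L$ is atypically small, and to control each separately. Since $v$ is non-increasing and bounded by $1$, the contribution from large $\ell$ is governed by $v(\ell)$ evaluated near the mean, while the contribution from small $\ell$ is governed by the exponentially small probability of a large deviation.

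First I would fix a threshold, say $\ell_0 = \lceil \tfrac{1}{2}p(k-1)\rceil$, and write $\ex[v(L)] = \ex[v(L)\mathbbm{1}\{L \geq \ell_0\}] + \ex[v(L)\mathbbm{1}\{L < \ell_0\}]$. For the first term, monotonicity of $v$ gives $v(L)\le v(\ell_0)$ on the event $\{L\ge\ell_0\}$, and by \eqref{eq:fixed volume} we have $v(\ell_0) = O(\ell_0^{-d}) = O\!\big((p(k-1))^{-d}\big) = O(k^{-d})$ since $p$ is a fixed constant; then $\ex[v(L)\mathbbm{1}\{L\ge\ell_0\}] \le v(\ell_0) = O(k^{-d})$. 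For the second term, I would use $v(L)\le 1$ together with a Chernoff bound: $\Pr(L < \ell_0) = \Pr\!\big(L < \tfrac12 p(k-1)\big) \le e^{-c k}$ for some constant $c>0$ depending on $p$. Hence $\ex[v(L)\mathbbm{1}\{L<\ell_0\}] \le \Pr(L<\ell_0) = e^{-ck} = O(k^{-d})$, and adding the two bounds finishes the proof.

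One small subtlety worth stating carefully: \eqref{eq:fixed volume} is an asymptotic statement ($v(\ell)=O(\ell^{-d})$ as $\ell\to\infty$), so there is a constant $C'$ and an index $\ell^*$ with $v(\ell)\le C'\ell^{-d}$ for all $\ell\ge\ell^*$; for $k$ large enough, $\ell_0\ge\ell^*$, so the bound applies at $\ell=\ell_0$, and the finitely many small $k$ are absorbed into the implied constant. I would also note that the lemma as used only needs $p$ to be a fixed constant in $(0,1)$ (here $p = q = \Pr(X\sim Y)$ on $\mathbb{T}^d$, which is a positive constant for fixed $r$), so $p(k-1) = \Theta(k)$, which is what makes $v(\ell_0)=O(k^{-d})$ work.

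The main obstacle is essentially bookkeeping rather than a genuine difficulty: one must make sure the $O(\cdot)$ in \eqref{eq:fixed volume} is invoked only in its valid range (large $\ell$), which is exactly why the split at $\ell_0=\Theta(k)$ is chosen — it keeps $L$ in the good range with overwhelming probability while the bad range contributes only an exponentially small amount. There is no need to track the Richey--Sarkar constant $C_{d,r}$ explicitly, nor to invoke the de-Poissonization argument again; the statement $v(\ell)=O(\ell^{-d})$ from \eqref{eq:fixed volume} is all that is needed.
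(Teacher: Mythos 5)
Your proposal is correct and follows essentially the same route as the paper: split the expectation at the threshold $\tfrac{p}{2}(k-1)$, bound the atypical event via exponential concentration (you use a Chernoff bound where the paper uses Hoeffding, with $v(L)\leq 1$), and apply the $v(\ell)=O(\ell^{-d})$ bound from \eqref{eq:fixed volume} on the typical event where $L=\Theta(k)$. The extra care you take about the validity range of the asymptotic bound is a fine (if minor) refinement of the same argument.
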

\begin{proof}
The result follows by considering the high probability event $\{L \geq (k-1)\frac{p}{2}$\} and expanding the expectation in terms of it:
\begin{align}
        \ex[v(L)] &= \ex\left[v(L)\mathbbm{1}\left\{L < (k-1)\frac{p}{2}\right\}\right]\nonumber\\
        & \mkern 140mu+ \ex\left[v(L)\mathbbm{1}\left\{L \geq (k-1)\frac{p}{2}\right\}\right]\nonumber\\
        &\leq \pr\left(L \leq (k-1)\frac{p}{2}\right) + C. \left((k-1)\frac{p}{2}\right)^{-d} \label{eq:l1}\\
        &\leq e^{-(k-1)\frac{p^2}{2}} + C. \left((k-1)\frac{p}{2}\right)^{-d}\label{eq:l2}\\
        & \leq  \tilde{C}. k^{-d}\label{eq:l3},
    \end{align}
where the first term of \eqref{eq:l1} is due to the fact that $v(L) \leq 1$ and the second term follows from combining \eqref{eq:fixed volume}, which is $v(l) \leq C \cdot l^{-d}$ for some constant $C>0$,  with $l \geq (k-1)\frac{p}{2}$; \eqref{eq:l2} applies the Hoeffding's inequality; and in \eqref{eq:l3}, $\tilde{C}$ is a large enough constant.
\end{proof}
 By using Lemma~\ref{lem:bin} in \eqref{eq:v1} and combining it with \eqref{eq:ent2} and \eqref{eq: ent1}, we obtain
\begin{align*}
H(G_m)&\geq - \sum_{k=1}^{m} \log \left( \tilde{C}. k^{-d}\right) \\
&= d \sum_{k=1}^{m} \log k - m \log \tilde{C}\\
& = dm\log m - o(m \log m),
\end{align*}
completing the proof of the theorem.

\section{Conclusion} \label{sec:conc}
In this work, we studied the asymptotic behavior of the entropy of a hard random geometric graph on the $d$-dimensional unit cube and unit torus. For all fixed connection range values, we derived upper bounds on $H(G_m)$. In a few cases, we proved that the entropy asymptotically behaves like $dm\log m$, which was then used to deduce the structural entropy of an RGG. 

It should be noted that the proof technique of our complete result on $\mathbb{T}^d$ can be extended to the other connection range values and the unit cube. The reason for restricting to a special case is to make the technical analysis simpler. We strongly believe and conjecture that the leading order terms in the upper bounds of Thm.~\ref{thm:ent_rgg_upper_euclidean} and Thm.~\ref{thm:ent_rgg_upper_torus} give the right behaviour for the entropy asymptotically.

Currently we considered only a connection range value $r$ that is fixed; however, it is an interesting open question how the entropy scales when $r$ depends on the number of vertices $m$. Also, from a practical standpoint, it is of interest to design a compression scheme that attains a compression length close the entropy $dm \log m$, which we leave for future work.

% \clearpage
% \newpage
\bibliographystyle{IEEEtran}
\bibliography{IEEEabrv,References}

% \clearpage
%\newpage
\appendices
\section{A general approach for proving a lower bound on the entropy of an RGG}\label{app:gen approach}
For a lower bound, we can start by expanding the entropy. Since
$
    H(G_n)  = H\left(G_{n-1}, E_{1,n}, E_{2,n}, \ldots, E_{n-1,n}\right)
    = H\left(G_{n-1}\right)+H\left(E_{1,n}, E_{2,n}, \ldots, E_{n-1,n}|G_{n-1}\right),
$ for all $2 \leq n \leq m$, we have
\begin{align}
    &H(G_n) - H(G_{n-1}) \nonumber\\ &\geq H\left(E_{1,n}, E_{2,n}, \ldots, E_{n-1,n}|G_{n-1} , X_1, X_2, \ldots, X_{n-1}\right)\nonumber\\
    &= H\left(E_{1,n}, E_{2,n}, \ldots, E_{n-1,n}|X_1, X_2, \ldots, X_{n-1}\right)\label{ineq:X defines G 2}\\
    &= H\left(E_{1,0}, E_{2,0}, \ldots, E_{n-1,0}|X_1, X_2, \ldots, X_{n-1}\right)\label{ineq:relabel 2}, 
\end{align}
where \eqref{ineq:X defines G 2} is due to the fact that the locations $X_1, \ldots, X_{n-1}$ completely determine the graph $G_{n-1}$, and in the equality \eqref{ineq:relabel 2}, relabelling the node $n$ with the random location $X_n$ by $0$ with the random location $X_0$ does not change the value of the conditional entropy. Since $H(G_1)=0$, in essence, we have the lower bound 
\begin{align}
 H(G_m) & = \sum_{n=2}^{m} \left[H(G_n) -H(G_{n-1})\right] \nonumber \\ &\geq \sum_{n=2}^{m} H\left(E_{1, 0},\ldots,E_{n-1, 0}\big| X_1, \ldots,X_{n-1}\right)\nonumber \\ & = \sum_{n=1}^{m-1} H\left(E_{1, 0},\ldots,E_{n, 0}\big| X_1, \ldots,X_{n}\right)\label{eq:graph_ent:recursive}.
\end{align}
The asymptotic behaviour of $H(G_m)$ is governed by the behaviour of $H(E_{1 0},\ldots,E_{m 0}\big| X_1, \allowbreak  \ldots,X_{m})$ as $m \to \infty$. It is enough to proof a lower bound of the form $ c_r \cdot d\log m - o(\log m)$ with some constant $c_r$ for all large enough $m$ on the above conditional entropy, i.e., $H\left(E_{1 0},\ldots,E_{m 0}\big| X_1, \ldots,X_{m}\right) \geq  c_r \cdot d\log m - o(\log m)$ for all $m \geq M$, where $M$ is some large integer, because this immediately yields 
\begin{align}
    H(G_m) &\geq \sum_{n=2}^{m-1} H\left(E_{1 0},\ldots,E_{n 0}\big| X_1, \ldots,X_{n}\right)  \nonumber\\ &\geq   \sum_{n=M}^{m-1}[ c_r \cdot d\log n - o(\log n)]\nonumber\\
    % & = C_r \left[n\sum_{l=N}^{n-1} \frac{1}{l} -  (n-N)\right]\\
    & \geq  c_r \cdot d\int_{M-1}^{m-1} \log x \, dx - o(m \log m)\nonumber\\ 
    % &=C[ (n-1)\ln (n-1) - (n-1) -(N-1)\ln (N-1) - (N-1)]- C_1n\nonumber\\
    &=  c_r \cdot d m \log m - o(m \log m)\label{eq:entropy low arg}.
\end{align}

Let us denote the domain $[0,1]^d$ of the $d$-dimensional unit cube and unit torus by $V_d$. For a fixed edge configuration $(e_{10},\ldots,e_{m0})\in \{0,1\}^m$ and a fixed point configuration $(x_1,\ldots,x_m) \in V_d^m$, we have
\begin{align}
    &P\left(E_{1 0}=e_{10},\ldots,E_{m0}=e_{m0}\big| X_1=x_1, \ldots,X_{m}=x_m\right)\nonumber\\
    % &= \int_{V_d}P\left(E_{1 0}=e_{10},\ldots,E_{m0}=e_{m0}\big| X_1=x_1, \ldots,X_{m}=x_m, X_0=x_0\right)f_{X_0|X_1,\ldots,X_m}(x_0|x_1,\ldots,x_m)dx_0\\
    &= \int_{V_d}P\left(E_{1 0}=e_{10},\ldots,E_{m0}=e_{m0}\big| X_1=x_1,\right. \nonumber \\ & \hspace{12em} \left. \ldots,X_{m}=x_m, X_0=x_0\right)dx_0\nonumber\\
    &= \int_{V_d}\mathbbm{1}_{A_{e_{10},\ldots,e_{m0},x_1,\ldots,x_m}}(x_0)dx_0\nonumber\\ &=\vol(A_{e_{10},\ldots,e_{m0},x_1,\ldots,x_m}),\label{eq:prob_volume}
\end{align}
where $A_{e_{10},\ldots,e_{m0},x_1,\ldots,x_m}$ is the set of all points $x_0 \in [0,1]^d$ satisfying the edge configuration $(e_{01},\ldots,e_{0m})$ with the given point configuration $(x_1,\ldots,x_m)$ and $\vol(A_{e_{10},\ldots,e_{m0},x_1,\ldots,x_m})$ is its volume. If $A_{e_{10},\ldots,e_{m0},x_1,\ldots,x_m}$ is an empty set, then  $\vol(A_{e_{10},\ldots,e_{m0},x_1,\ldots,x_m})=0$. As a result, we have
\begin{align}
&H\left(E_{1 0},\ldots,E_{m 0}\big| X_1=x_1, \ldots,X_{m}=x_m\right)\nonumber\\
% &= \sum_{(e_{10},\ldots,e_{m0})\in \{0,1\}^m}P\left(E_{1 0},\ldots,E_{m 0}\big| X_1=x_1, \ldots,X_{m}=x_m\right) \log_2\frac{1}{P\left(E_{1 0},\ldots,E_{m 0}\big| X_1=x_1, \ldots,X_{m}=x_m\right)}\\
&=\sum_{(e_{10},\ldots,e_{m0})\in \{0,1\}^m}\vol(A_{e_{10},\ldots,e_{m0},x_1,\ldots,x_m}) \nonumber \\ & \hspace{10em}\times \log\frac{1}{\vol(A_{e_{10},\ldots,e_{m0},x_1,\ldots,x_m})}\nonumber.\\
&= \left.\mathbb{E}_{E_{10},\ldots,E_{m0}}\left[\log\frac{1}{\vol(A_{E_{10},\ldots,E_{m0},X_1,\ldots,X_m})} \right\vert X_1=x_1, \right. \nonumber \\ & \hspace{15em}  \ldots, X_m=x_m  \bigg].
\end{align}
We can then rewrite the required conditional entropy term as follows:
\begin{align}
    &H\left(E_{1 0},\ldots,E_{m 0}\big| X_1, \ldots,X_{m}\right)\nonumber\\
    % &= \int_{V_d^m} H\left(E_{1 0},\ldots,E_{m 0}\big| X_1=x_1, \ldots,X_{m}=x_m\right)f_{X_1, \ldots,X_{m}}(x_1,\ldots,x_m)dx_1\cdots dx_m\\
   % &= \int_{V_d^m} H\left(E_{1 0},\ldots,E_{m 0}\big| X_1=x_1, \ldots,X_{m}=x_m\right)dx_1\cdots dx_m \label{eq:final_ent}\\
    % &=\mathbb{E}_{X_1, \ldots, X_m}\left[H\left(E_{1 0},\ldots,E_{m 0}\big| X_1, \ldots,X_{m}\right)\right]\nonumber\\
    &= \mathbb{E}_{X_1, \ldots, X_m}\bigg[\mathbb{E}_{E_{10},\ldots,E_{m0}}\bigg[\log\frac{1}{\vol(A_{E_{10},\ldots,E_{m0},X_1,\ldots,X_m})}   \nonumber \\ & \hspace{18em} \bigg\vert X_1, \ldots, X_m  \bigg] \bigg]\nonumber\\
    &=\mathbb{E}\left[\log\frac{1}{\vol(A_{E_{10},\ldots,E_{m0},X_1,\ldots,X_m})} \right]. \label{eq:avg_log_vol}   
    % &= \mathbb{E}\left[\log\frac{1}{\vol(A_{E_{10},\ldots,E_{m0},X_1,\ldots,X_m})} \right] \nonumber\\
    % &\geq \log\frac{1}{\mathbb{E}\left[\vol(A_{E_{10},\ldots,E_{m0},X_1,\ldots,X_m}) \right]}, \label{eq: volume lower bound}
    % &= \int_{V_d^m} \sum_{(e_{10},\ldots,e_{m0})\in \{0,1\}^m}P\left(E_{1 0},\ldots,E_{m 0}\big| X_1=x_1, \ldots,X_{m}=x_m\right)dx_1\cdots dx_m
\end{align}
In the subsequent proofs, we will bound the expression in \eqref{eq:avg_log_vol}.
\section{Proof of Theorem~\ref{thm:ent_rgg_lower_one_d}: Lower bound on the entropy of the one dimensional RGG $(d=1)$ on $[0,1]$}\label{proof:thm:ent_rgg_lower_one_d}
We will prove this theorem by lower bounding \eqref{eq:avg_log_vol} using the order statistics \cite{orderstatlecturenotes} of the random points $(X_1, \ldots, X_m)$ in the unit interval $[0,1]$. To this end, consider the following lemma.

\begin{lemma}\label{lem:volume_diff_E}
For $0< r <1$, the volume of the set $A_{E_{10},\ldots,E_{m0},X_1,\ldots,X_m}$ for fixed node locations $(X_1, \ldots, X_m)$ is given as follows:
\begin{enumerate}
    \item If $E_{10}=\ldots=E_{m0}=0$, then 
    \begin{align}
&\mkern -24mu \vol(A_{E_{10},\ldots,E_{m0},X_1,\ldots,X_m}) \nonumber\\& \mkern -24mu = [X^{(1)}-X^{(0)}-r]^{+}  + \sum_{i=1}^{m-1}[X^{(i+1)}-X^{(i)}-2r]^{+}\nonumber\\& \hspace{8em}+ [X^{(m+1)}-X^{(m)}-r]^{+};
\end{align}
\item 
If $E_{10}=\ldots=E_{m0}=1$, then 
    \begin{align}
&\mkern -24mu\vol(A_{E_{10},\ldots,E_{m0},X_1,\ldots,X_m}) \nonumber\\&\mkern -24mu= [2r - (\max\{X^{(m)},r\} - \min\{X^{(1)},1-r\})]^{+};
\end{align}
\item  If $E_{10}, \ldots, E_{m0}$ is any configuration with at least one $0$ and one $1$, then 
\begin{align}
\vol(A_{E_{10},\ldots,E_{m0},X_1,\ldots,X_m}) \leq \max_{0\leq i \leq m}(X^{(i+1)}-X^{(i)}),
\end{align} 
\end{enumerate}
where $[a]^{+}=\max\{a,0\}$, $X^{(0)}:=0$, $X^{(m+1)}:=1$, and $(X^{(1)}, X^{(2)}, \ldots, X^{(m)})$ are the ordered points of $(X_1, X_2, \ldots, X_m)$.
\end{lemma}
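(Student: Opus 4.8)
The plan is to read off the volume of $A:=A_{E_{10},\ldots,E_{m0},X_1,\ldots,X_m}$ directly from its description as a subset of $[0,1]$. Writing $B_i:=[X_i-r,X_i+r]$, $S_1:=\{i:E_{i0}=1\}$ and $S_0:=\{i:E_{i0}=0\}$, we have $A=[0,1]\cap\bigcap_{i\in S_1}B_i\cap\bigcap_{i\in S_0}B_i^{c}$, i.e.\ $x_0\in A$ precisely when $x_0$ is within distance $r$ of every $X_i$ with $i\in S_1$ and more than $r$ away from every $X_i$ with $i\in S_0$. For item~(1), $S_1=\emptyset$, so $A=[0,1]\setminus\bigcup_iB_i$; with $X^{(0)}=0$, $X^{(m+1)}=1$, a point in an open interval $(X^{(i)},X^{(i+1)})$ survives iff it is more than $r$ from both nearest samples $X^{(i)},X^{(i+1)}$, giving a free piece of length $[X^{(i+1)}-X^{(i)}-2r]^{+}$, while the two end intervals $(X^{(0)},X^{(1)})$ and $(X^{(m)},X^{(m+1)})$ involve only one nearby sample and contribute $[X^{(1)}-X^{(0)}-r]^{+}$ and $[X^{(m+1)}-X^{(m)}-r]^{+}$; summing over $i$ gives the formula. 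For item~(2), $S_0=\emptyset$, so $A=[0,1]\cap[X^{(m)}-r,\,X^{(1)}+r]$, of length $[\min\{X^{(1)}+r,1\}-\max\{X^{(m)}-r,0\}]^{+}$, and substituting $\min\{X^{(1)}+r,1\}=\min\{X^{(1)},1-r\}+r$ and $\max\{X^{(m)}-r,0\}=\max\{X^{(m)},r\}-r$ yields the claimed expression.

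Item~(3) is the substantive case; assume $A\neq\emptyset$. First I would show that $A$ is a single interval $[\alpha,\beta]$. Since the configuration is mixed, $S_1\neq\emptyset$, so $A\subseteq\bigcap_{i\in S_1}B_i=[\max_{i\in S_1}X_i-r,\ \min_{i\in S_1}X_i+r]$, an interval of length at most $2r$; and since each ball $B_j$ ($j\in S_0$) has length exactly $2r$, it can never lie strictly inside an interval of length at most $2r$, so deleting the balls $B_j$, $j\in S_0$, one at a time from a length-at-most-$2r$ interval always leaves an interval. Hence $\vol(A)=\beta-\alpha\leq 2r$. Second, every sample $X_l$ with $\alpha<X_l<\beta$ must satisfy $E_{l0}=1$: the point $x_0=X_l$ lies in $A$ and there the edge to $l$ is forced present, so $X_l\in S_1$. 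Thus every sample strictly inside $(\alpha,\beta)$ belongs to $S_1$.

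With these two facts, pick any $b\in S_0$; then $X_b\notin(\alpha,\beta)$, so $X_b\leq\alpha$ or $X_b\geq\beta$. Suppose $X_b\leq\alpha$, and take $X_b$ to be the largest $S_0$-sample with $X_b\leq\alpha$ (the case where all $S_0$-samples lie $\geq\beta$ is symmetric under reflection of the interval). Every $x_0\in A$ lies to the right of $X_b$ and more than $r$ away from it, so $\alpha\geq X_b+r$. By maximality of $X_b$, every sample in $(X_b,\alpha]$ lies in $S_1$, and by the previous step every sample in $(\alpha,\beta)$ lies in $S_1$; hence every sample in $(X_b,\beta)$ lies in $S_1$. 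If there is no sample in $(X_b,\beta)$, then that interval sits inside a single inter-point gap, so $\vol(A)=\beta-\alpha<\beta-X_b\leq\max_{0\leq i\leq m}(X^{(i+1)}-X^{(i)})$. Otherwise, let $Z$ be the smallest sample in $(X_b,\beta)$; there is no sample between $X_b$ and $Z$, so $Z-X_b$ is one of the gaps $X^{(i+1)}-X^{(i)}$, and since $Z\in S_1$ we have $A\subseteq B_Z$, hence $\beta\leq Z+r$. Combining, $\vol(A)=\beta-\alpha\leq(Z+r)-(X_b+r)=Z-X_b\leq\max_{0\leq i\leq m}(X^{(i+1)}-X^{(i)})$, as claimed.

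The main obstacle is item~(3): getting the geometry of $A$ under control — showing that it is connected (so that $\vol(A)=\beta-\alpha$) and locating an $S_0$-sample $X_b$ hugging one side of $A$ whose neighbouring sample $Z$ is then forced into $S_1$, which is exactly what collapses the length of $A$ from the crude bound $2r$ down to a single gap rather than to a sum of gaps. The remaining fuss — open versus closed endpoints, ties among the $X_i$, and the behaviour at the domain boundary $\{0,1\}$ — is measure-zero bookkeeping that I would dispatch summarily.
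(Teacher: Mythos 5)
Your proof is correct, and parts (1) and (2) are essentially the paper's own computation (interior gaps contribute $[X^{(i+1)}-X^{(i)}-2r]^{+}$, boundary gaps $[\,\cdot\,-r]^{+}$; in the all-ones case $A=[0,1]\cap[X^{(m)}-r,X^{(1)}+r]$). Part (3), the substantive case, is where you take a genuinely different route. The paper argues purely locally: since the edge labels on the ordered samples are not all equal, there exist consecutive ordered points $X^{(j)},X^{(j+1)}$ with different labels, and the two constraints from that single pair already trap $A$ in a translate of that one gap (e.g.\ $X^{(j)}+r< x_0\le X^{(j+1)}+r$ when the label at $X^{(j)}$ is $0$ and at $X^{(j+1)}$ is $1$), so $\vol(A)\le X^{(j+1)}-X^{(j)}$ with no need to understand the global shape of $A$. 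You instead prove global structure: $A\subseteq\bigcap_{i\in S_1}B_i$, an interval of length at most $2r$, from which the removal of the length-$2r$ balls $B_j$, $j\in S_0$, cannot punch an interior hole, so $A$ is a single interval $[\alpha,\beta]$; samples interior to $A$ are forced into $S_1$; and $A$ is then squeezed between the extremal $S_0$-sample $X_b$ and its nearest $S_1$-successor $Z$, giving $\vol(A)\le Z-X_b$, a single gap. Both arguments are sound and yield the same bound; the paper's is shorter and sidesteps all connectivity and case bookkeeping, while yours buys a sharper structural picture (in the mixed case $A$ is one interval pressed against an identified gap) at the cost of the case analysis over where the $S_0$-samples sit and the tie/boundary caveats you correctly flag as measure-zero.
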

\begin{proof} 
For ease of notation,  we use $A$ instead of $A_{E_{10},\ldots,E_{m0},X_1,\ldots,X_m}$ in the rest of the proof of this lemma.
 Let us start by proving the first statement. Any $x_0 \in A \cap (X^{(i+1)}, X^{(i)}]$, $1 \leq i \leq m-1$ iff it satisfies the condition that  $ X^{(i)} + r < x_0 < X^{(i+1)} - r$ because all edge random variables take the value $0$. This implies that $\vol(A \cap (X^{(i+1)}, X^{(i)}])= [X^{(i+1)}-X^{(i)}-2r]^{+}$ for $1 \leq i \leq m-1$. Similarly, any $x_0 \in A \cap (X^{(0)}, X^{(1)}]$ (resp. $A \cap (X^{(m)}, X^{(m+1)}]$) satisfies the condition  $ 0= X^{(0)}  < x_0 < X^{(1)} - r$ (resp. $ X^{(m)} + r  < x_0 \leq X^{(m+1)} = 1$). In these cases, we have $\vol(A \cap (X^{(0)}, X^{(1)}])= [X^{(1)}-X^{(0)}-r]^{+}$ and $\vol(A \cap (X^{(m)}, X^{(m+1)}])= [X^{(m+1)}-X^{(m)}-r]^{+}$. As the intervals $(X^{(i+1)}, X^{(i)}]$ are disjoint and cover $(0,1]$, we have
 $$\vol(A)= \sum_{i=0}^m \vol(A \cap (X^{(i+1)}, X^{(i)}] ),$$ 
 proving the statement 1).

 For 2), where $E_{10}=\ldots=E_{m0}=1$, the set is completely determined by the extreme points  $X^{(1)}$ and $X^{(m)}$. Any $x_0 \in A$ iff $x_0 \leq X^{(1)}+r$ and $X^{(m)}-r \leq x_0$. In this case, we have 
 $$\vol(A)= [\min\{X^{(1)}+r, 1\} - \max\{X^{(m)}-r, 0\}]^{+},$$
 whence the second statement follows by bringing out $r$ from $\max$ and $\min$.

 In the third statement, there is at least one $0$ and one $1$ in the edge configuration. It means that there exist  neighbours $X^{(j)}$ and $X^{(j+1)}$ for some $j \in \{1, \dots, m\}$ such that one of the corresponding edge random variables is $0$ and the other is $1$. Let us assume that the edge random variable corresponding to $X^{(j)}$ is $0$ with the other being $1$. In this case, if $x_0 \in A$ then $X^{(j)}+r \leq x_0 \leq X^{(j+1)}+r$. On the other hand, when  the edge random variable corresponding to $X^{(j)}$ is $1$ with the other being $0$,  $x_0 \in A$ implies that $X^{(j)}-r \leq x_0 \leq X^{(j+1)}-r$. In both these case, we have 
 $$\vol(A) \leq X^{(j+1)}-X^{(j)} \leq \max_{0\leq i \leq m}(X^{(i+1)}-X^{(i)}).$$ 
\end{proof}

Now we will consider two separate cases, depending on whether the connection radius $r$ is less than or greater than one half. 

\subsection{$0< r\leq \frac{1}{2}$}
In this case, we can apply Jensen's inequality to $\log\frac{1}{z}$ in  the conditional entropy equation \eqref{eq:avg_log_vol}:
\begin{align}
H\left(E_{1 0},\ldots,\right.& \left. E_{m 0}\big| X_1, \ldots,X_{m}\right)\nonumber\\
&= \mathbb{E}\left[\log\frac{1}{\vol(A_{E_{10},\ldots,E_{m0},X_1,\ldots,X_m})} \right] \nonumber\\
    &\geq \log\frac{1}{\mathbb{E}\left[\vol(A_{E_{10},\ldots,E_{m0},X_1,\ldots,X_m}) \right]}, \label{eq: volume lower bound}
\end{align}
The entropy can be further bounded by deriving an upper bound on the average volume:
\begin{align}
&\mkern -16mu \mathbb{E}\left[\vol(A_{E_{10},\ldots,E_{m0},X_1,\ldots,X_m})\right]\nonumber\\ 
% &\mkern -16mu =\mathbb{E}\big[\vol(A_{E_{10},\ldots,E_{m0},X_1,\ldots,X_m})(\mathbbm{1}\{E_{10}=\ldots=E_{m0}=0\} \nonumber\\ &\hspace{10em} + \mathbbm{1}\{E_{10}=\ldots=E_{m0}=1\})\big] \nonumber\\ &+ \mathbb{E}\big[\vol(A_{E_{10},\ldots,E_{m0},X_1,\ldots,X_m})(1 \nonumber\\ 
% &\mkern 20mu -\mathbbm{1}\{E_{10}=\ldots=E_{m0}=0\} - \mathbbm{1}\{E_{10}=\ldots=E_{m0}=1\})\big]\nonumber\\
&\mkern -16mu =\mathbb{E}\big[\vol(A_{E_{10},\ldots,E_{m0},X_1,\ldots,X_m})(\mathbbm{1}\{E_{10}=\ldots=E_{m0}=0\} \nonumber\\ &\hspace{10em} + \mathbbm{1}\{E_{10}=\ldots=E_{m0}=1\})\big]\nonumber\\
& + \mathbb{E}\big[\vol(A_{E_{10},\ldots,E_{m0},X_1,\ldots,X_m})\nonumber\\ &\hspace{7em} \times\mathbbm{1}\{\exists \ i\neq j  \text{ s. t. }E_{i0}=0, E_{j0}=1\} \big]\nonumber\\
    &\mkern -16mu \leq \mathbb{P}(E_{10}=\ldots=E_{m0}=0) + \mathbb{P}(E_{10}=\ldots=E_{m0}=1)  \nonumber\\    &\hspace{8em} +  \mathbb{E}\left[\max_{1\leq i \leq m-1}(X^{(i+1)}-X^{(i)})\right], \label{eq:sum_three_terms}
\end{align}
where the last inequality uses the fact that the volume is less than one for the first two terms and Lemma~\ref{lem:volume_diff_E} along with the inequality $\mathbbm{1} \leq 1$ (as the factor $\vol \geq 0$) for the third term.\\

Let us consider the first term  of \eqref{eq:sum_three_terms}.
\begin{align}
    &\mathbb{P}(E_{10}=\ldots=E_{m0}=0)\nonumber\\ & = \mathbb{E}_{X_1, \ldots, X_m}\left[\mathbb{P}\left(E_{10}=\ldots=E_{m0}=0 \big| X_1, \ldots,X_{m}\right)\right]\nonumber\\ &
    = \mathbb{E}_{X_1, \ldots, X_m}\left[\vol(A_{0,\ldots,0,X_1,\ldots,X_m})\right]\label{eq:prob_zero_1}\\
    & = \mathbb{E}\bigg[[X^{(1)}-X^{(0)}-r]^{+} + \sum_{i=1}^{m-1}[X^{(i+1)}-X^{(i)}-2r]^{+}\nonumber\\ & \hspace{10em}  + [X^{(m+1)}-X^{(m)}-r]^{+}\bigg]\label{eq:prob_zero_2}\\
    & \leq \sum_{i=0}^{m} \mathbb{E}\left[[X^{(i+1)}-X^{(i)}-r]^{+}\right]\nonumber\\
    &= m\cdot \mathbb{E}\left[[X^{(1)}-X^{(0)}-r]^{+}\right]\label{eq:prob_zero_3}\\
    &=  m\cdot \mathbb{E}\left[[X^{(1)}-X^{(0)}-r]\mathbbm{1}\{X^{(1)}-X^{(0)}\geq r\}\right] \nonumber \\
    & \leq m (1-r) \cdot \mathbb{E}\left[\mathbbm{1}\{X^{(1)}-X^{(0)}\geq r\}\right]\nonumber\\
    & = m (1-r) \cdot \mathbb{P}\left(X^{(1)}-X^{(0)}\geq r\right)\nonumber\\
    &= m(1-r)^{m+1}, \label{eq:prob_es_zero}
\end{align}
where \eqref{eq:prob_zero_1} follows from \eqref{eq:prob_volume}; \eqref{eq:prob_zero_2} applies Lemma~\eqref{lem:volume_diff_E}; and the equalities \eqref{eq:prob_zero_3} and \eqref{eq:prob_es_zero} are due to the fact the gaps $X^{(i+1)}-X^{(i)}$ are identically distributed with $\mathbb{P}\left(X^{(i+1)}-X^{(i)}\geq r\right) = (1-r)^m$.\\

We can bound the second term of \eqref{eq:sum_three_terms} in a similar way.
\begin{align}
    &\mathbb{P}(E_{10}=\ldots=E_{m0}=1)\nonumber\\
    & = \mathbb{E}_{X_1, \ldots, X_m}\left[\mathbb{P}\left(E_{10}=\ldots=E_{m0}=1 \big| X_1, \ldots,X_{m}\right)\right]\nonumber\\ 
    &
    = \mathbb{E}_{X_1, \ldots, X_m}\left[\vol(A_{1,\ldots,1,X_1,\ldots,X_m})\right]\nonumber\\ 
    &
    \leq  \mathbb{E}\Big[\vol(A_{1,\ldots,1,X_1,\ldots,X_m})  \Big( \mathbbm{1}\{X^{(1)} < 1-r, X^{(m)} \geq r \} \nonumber\\ 
    & \hspace{7 em}+ \mathbbm{1}\{X^{(1)} \geq 1-r \}  + \mathbbm{1}\{X^{(m)} < r\}\Big)\Big]\nonumber\\
    &\leq  \mathbb{E}\Big[\vol(A_{1,\ldots,1,X_1,\ldots,X_m}) \mathbbm{1}\{X^{(1)} < 1-r, X^{(m)} \geq r \}\Big] 
    \nonumber\\ 
    & \hspace{7 em} + \mathbb{P}(X^{(1)} \geq 1-r) + \mathbb{P}(X^{(m)} < r)\nonumber\\
    & = \mathbb{E}\Big[\big[2r - (X^{(m)} - X^{(1)})\big]^{+}\mathbbm{1}\{X^{(1)} < 1-r, X^{(m)} \geq r \}\Big] \nonumber\\ 
    & \hspace{7 em} + \mathbb{P}(X^{(1)} \geq 1-r)  +  \mathbb{P}(X^{(m)} < r)\label{eq:prob_one_2}\\
    & = \mathbb{E}\Big[\big[2r - (X^{(m)} - X^{(1)})\big] \mathbbm{1}\{2r \geq X^{(m)} - X^{(1)}  \}\nonumber\\ & \mkern 180 mu \times \mathbbm{1}\{X^{(1)} < 1-r, X^{(m)} \geq r \}\Big]  \nonumber \\ &  \hspace{7 em} + \mathbb{P}(X^{(1)} \geq 1-r) + \mathbb{P}(X^{(m)} < r),\label{eq:prob_one_f}
\end{align}
where \eqref{eq:prob_one_2} uses Lemma~\ref{lem:volume_diff_E}. If the connection radius $r \leq \frac{1}{2}$, then we can bound \eqref{eq:prob_one_f} as follows.
\begin{align}
&\mathbb{P}(E_{10}=\ldots=E_{m0}=1)\nonumber\\
    &\leq \mathbb{E}\Big[\mathbbm{1}\{2r \geq X^{(m)} - X^{(1)}  \} \mathbbm{1}\{X^{(1)} < 1-r, X^{(m)} \geq r \}\Big]  \nonumber \\ &  \hspace{8 em} + \mathbb{P}(X^{(1)} \geq 1-r) + \mathbb{P}(X^{(m)} < r)\nonumber\\
    &\leq \mathbb{E}\left[\mathbbm{1}\{2r \geq X^{(m)} - X^{(1)}  \}\right] + \mathbb{P}(X^{(1)} \geq 1-r)\nonumber \\ & \mkern 290 mu  + \mathbb{P}(X^{(m)} < r)\nonumber\\
    & \leq  \mathbb{P}(2r \geq X^{(m)} - X^{(1)})+\mathbb{P}(X^{(1)} \geq 1-r)+ \mathbb{P}(X^{(m)} < r)\nonumber\\
    &= (m(1-2r)+2r)(2r)^{m-1} + r^m + r^m, \label{eq:r less half}
\end{align}
where \eqref{eq:r less half} follows from \cite[(4.10)]{orderstatlecturenotes} and \cite[(4.3)]{orderstatlecturenotes}.\\

The third term \cite[Theorem~2.2]{Holst_1980} in \eqref{eq:sum_three_terms} is  
\begin{align}
    \mathbb{E}\left[\max_{1\leq i \leq m-1}(X^{(i+1)}-X^{(i)})\right] = \frac{1}{m}\sum_{i=0}^{m-1} \frac{1}{m-i} \leq \frac{\log m + 1}{m}. \label{eq:average_max}
\end{align}

By combining all these terms, we obtain that 
\begin{align*}
\mathbb{E}&\left[\vol(A_{E_{10},\ldots,E_{m0},X_1,\ldots,X_m})\right] \nonumber\\ &\leq m (1-r)^{m+1}+ (m(1-2r)+2r)(2r)^{m-1} + 2r^m \nonumber\\ & \mkern 340 mu+ \frac{\log m + 1}{m} \\&\leq  \frac{\log (m+1)}{m},  
\end{align*} 
where the last inequality holds for all large enough $m$. This proves that
\begin{align*}
&H\left(E_{1 0},\ldots,E_{m 0}\big| X_1, \ldots,X_{m}\right) \nonumber\\& =\mathbb{E}\left[\log\frac{1}{\vol(A_{E_{10},\ldots,E_{m0},X_1,\ldots,X_m})} \right] \nonumber\\
    &\geq \log\frac{1}{\mathbb{E}\left[\vol(A_{E_{10},\ldots,E_{m0},X_1,\ldots,X_m}) \right]}\\
    & \geq \log \left(\frac{m}{\log(m+1)}\right)= \log m - \log(\log(m+1))
    \end{align*}
    for all large enough $m$. By using this result in the argument of  \eqref{eq:entropy low arg}, we prove the lower bound
$$H(G_m)\geq m\log m - o(m \log m)$$
for the connection range $0 < r \leq \frac{1}{2}$.

\subsection{$\frac{1}{2} \leq r < 1$}
In contrast to the previous case, it is not useful to directly apply Jensen's inequality as in \eqref{eq: volume lower bound} to deal with the average volume since it does not converge to zero as $m \to \infty$ for $\frac{1}{2} \leq r < 1$, hence yielding a constant lower bound on the conditional entropy rather than one that grows like $\log m$.  In fact, the average volume converges to a non-zero constant. To see this, consider
\begin{align}
&\mathbb{E}\left[\vol(A_{E_{10},\ldots,E_{m0},X_1,\ldots,X_m}) \right]\nonumber\\ &\quad = \mathbb{E}\left[\mathbb{P}(E_{10},\ldots,E_{m0}|X_1, \ldots, X_m)\right] \nonumber \\
& \quad = \mathbb{E}\left[\mathbb{P}(E_{10},\ldots,E_{m0}|X_1, \ldots, X_m)(\mathbbm{1}_{\mathcal{E}_m} + \mathbbm{1}_{\mathcal{E}^c_m})\right],\label{eq:vol radius greater}
\end{align}
where $\mathcal{E}_m \triangleq \{E_{10}=\ldots=E_{m0}=1\}^c$. Observe that when $r \geq \frac{1}{2}$, any random point $X_0$ falling in the interval $[1-r, r]$ will be connected to the rest of the nodes through edges. Therefore, $\mathbb{P}(E_{10}=\ldots=E_{m0}=1|X_1, \ldots, X_m) \geq 2r-1$ and $\mathbb{P}(\mathcal{E}^c_m)=\mathbb{P}(E_{10}=\ldots=E_{m0}=1)= \mathbb{E}\left[\mathbb{P}\left(E_{10}=\ldots=E_{m0}=1 \big| X_1, \ldots,X_{m}\right)\right] \geq 2r-1$ which combined with \eqref{eq:vol radius greater} gives
\begin{align}
&\mathbb{E}\left[\vol(A_{E_{10},\ldots,E_{m0},X_1,\ldots,X_m}) \right]\nonumber\\ 
& \quad = \mathbb{E}\left[\mathbb{P}(E_{10},\ldots,E_{m0}|X_1, \ldots, X_m)(\mathbbm{1}_{\mathcal{E}_m} + \mathbbm{1}_{\mathcal{E}^c_m})\right],
\nonumber\\ 
& \quad \geq \mathbb{E}\left[\mathbb{P}(E_{10},\ldots,E_{m0}|X_1, \ldots, X_m)\mathbbm{1}_{\mathcal{E}^c_m}\right] \nonumber\\ 
& \quad \geq (2r-1)\mathbb{P}(\mathcal{E}_m) \nonumber\\
& \quad \geq (2r-1)^2.\label{eq:vol radius greater 1}
\end{align}

Despite this, we can show that the conditional entropy can be large by focusing on the event  $\mathcal{E}_m$. 
Let us start with the expression of the conditional entropy. Recall from \eqref{eq:avg_log_vol} that
\begin{align}
&H\left(E_{1 0},\ldots,E_{m 0}\big| X_1, \ldots,X_{m}\right) \nonumber\\
&=\mathbb{E}\left[\log\frac{1}{\vol(A_{E_{10},\ldots,E_{m0},X_1,\ldots,X_m})} \right] \nonumber\\
&\geq \mathbb{E}\left[\log\frac{1}{\vol(A_{E_{10},\ldots,E_{m0},X_1,\ldots,X_m})} \mathbbm{1}_{\mathcal{E}_m} \right]\nonumber\\
&\geq \mathbb{P}(\mathcal{E}_m) \log\frac{\mathbb{P}(\mathcal{E}_m)}{\mathbb{E}\left[ \vol(A_{E_{10},\ldots,E_{m0},X_1,\ldots,X_m})\mathbbm{1}_{\mathcal{E}_m} \right]}\label{eq:lower_r_half_1}\\
&= \mathbb{P}(\mathcal{E}_m) \log \mathbb{P}(\mathcal{E}_m) \nonumber\\ &\hspace{5 em} + \mathbb{P}(\mathcal{E}_m) \log\frac{1}{\mathbb{E}\left[ \vol(A_{E_{10},\ldots,E_{m0},X_1,\ldots,X_m})\mathbbm{1}_{\mathcal{E}_m} \right]}\nonumber\\
&\geq \frac{1}{e} \log \frac{1}{e} +  \mathbb{P}(\mathcal{E}_m) \log\frac{1}{\mathbb{E}\left[ \vol(A_{E_{10},\ldots,E_{m0},X_1,\ldots,X_m})\mathbbm{1}_{\mathcal{E}_m} \right]}\nonumber
\end{align}
where the inequality \eqref{eq:lower_r_half_1} follows from Jensen's equality of the form $\mathbb{E}[f(X) \mathbbm{1}_{\mathcal{A}}]= \mathbb{P}(A)\mathbb{E}[f(X) \mid \mathcal{A}] \geq \mathbb{P}(A)f(\mathbb{E}[X \mid \mathcal{A}])= \mathbb{P}(A)f\left(\frac{\mathbb{E}[X \mathbbm{1}_{\mathcal{A}}]}{\mathbb{P}(A)}\right)$ with a convex function $f$ to $f(z)=\log \frac{1}{z}$.

In order to find a lower bound on $\mathbb{P}(\mathcal{E}_m) =1- \mathbb{P}(E_{10}=\ldots=E_{m0}=1)$, we will make use of \eqref{eq:prob_one_f} as the steps in \eqref{eq:r less half} will not work for $r \geq \frac{1}{2}$. 
\begin{align}
    &\mathbb{P}(E_{10}=\ldots=E_{m0}=1)\nonumber\\
     & = \mathbb{E}\Big[\big[2r - (X^{(m)} - X^{(1)})\big] \mathbbm{1}\{2r \geq X^{(m)} - X^{(1)}  \}\nonumber\\ & \mkern 180 mu \times \mathbbm{1}\{X^{(1)} < 1-r, X^{(m)} \geq r \}\Big]  \nonumber \\ &  \hspace{7 em} + \mathbb{P}(X^{(1)} \geq 1-r) + \mathbb{P}(X^{(m)} < r)\nonumber\\
    &\leq  \mathbb{E}_{X_1, \ldots, X_m}\left[2r - (X^{(m)} - X^{(1)})\right] \nonumber\\ 
    & \hspace{6 em} + \mathbb{P}(X^{(1)} \geq 1-r)+ \mathbb{P}(X^{(m)} < r)\nonumber\\
    &= 2r - \mathbb{E}_{X_1, \ldots, X_m}\left[X^{(m)} - X^{(1)}\right] + r^m + r^m\nonumber\\
    &= 2r - \frac{m-1}{m+1} +2 r^m \label{eq:prob_es_are_one}
\end{align}
where \eqref{eq:prob_es_are_one} follows from \cite[(4.10)]{orderstatlecturenotes} and \cite[(4.3)]{orderstatlecturenotes}. Hence, we have
\begin{align*}
    \mathbb{P}(\mathcal{E}_m) &=1- \mathbb{P}(E_{10}=\ldots=E_{m0}=1) \\
    & \geq 1- \left(2r - \frac{m-1}{m+1} +2 r^m\right)\\
    &= 2(1-r)-\left(1 - \frac{m-1}{m+1} +2 r^m\right).
\end{align*}
We also know from \eqref{eq:sum_three_terms}, \eqref{eq:prob_es_zero} and \eqref{eq:average_max} that
\begin{align}
&\mathbb{E}\left[\vol(A_{E_{10},\ldots,E_{m0},X_1,\ldots,X_m})\mathbbm{1}_{\mathcal{E}_m} \right]\nonumber\\   
&=\mathbb{E}\left[\vol(A_{E_{10},\ldots,E_{m0},X_1,\ldots,X_m})(\mathbbm{1}\{E_{10}=\ldots=E_{m0}=0\}) \right]\nonumber\\& \quad +\mathbb{E}\big[\vol(A_{E_{10},\ldots,E_{m0},X_1,\ldots,X_m})\nonumber\\ &\hspace{9 em}\times \mathbbm{1}\{\exists \ i\neq j  \text{ s. t. }E_{i0}=0, E_{j0}=1\} \big]\nonumber\\
    &\leq \mathbb{P}(E_{10}=\ldots=E_{m0}=0) +  \mathbb{E}\left[\max_{1\leq i \leq m-1}(X^{(i+1)}-X^{(i)})\right],\nonumber\\
    &\leq m(1-r)^{m+1} +  \frac{\log m + 1}{m} \nonumber\\
    & \leq \frac{\log (m+1)}{m}, 
\end{align}
where the last inequality holds for all large enough $m$. Thus we have
\begin{align}
 &H\left(E_{1 0},\ldots,E_{m 0}\big| X_1, \ldots,X_{m}\right)\nonumber\\  & \geq \frac{1}{e} \log \frac{1}{e} +  \mathbb{P}(\mathcal{E}_m) \log\frac{1}{\mathbb{E}\left[ \vol(A_{E_{10},\ldots,E_{m0},X_1,\ldots,X_m})\mathbbm{1}_{\mathcal{E}_m} \right]}\nonumber\\
    & \geq \frac{1}{e} \log \frac{1}{e} \nonumber \\
    & \quad + \left[2(1-r)-\left(1 - \frac{m-1}{m+1} +2 r^m\right)\right] \log \left(\frac{m}{\log(m+1)}\right) \nonumber\\
    &= 2(1-r)\log m - o(\log m),
    \end{align}
for all large enough $m$. By using this result in the argument of  \eqref{eq:entropy low arg}, we prove the lower bound
$$H(G_m)\geq 2(1-r) m\log m - o(m \log m)$$
for the connection range $\frac{1}{2} \leq r < 1$,
completing the proof of Theorem~\ref{thm:ent_rgg_lower_one_d}.

\section{An Alternative Proof of Theorem~\ref{thm:ent_rgg_lower_torus} : Lower Bound in the case of $\mathbb{T}^d$} \label{proof:thm:ent_rgg_lower_torus}
We will find an asymptotic expression for $H\left(E_{1 0},\ldots,E_{m 0}\big| X_1, \ldots,X_{m}\right)$ when $r<\frac{1}{4}$, and as described in Appendix~\ref{app:gen approach}, the theorem result will follow. Recall the equation \eqref{eq:avg_log_vol} for the conditional entropy, and by applying Jensen's inequality to the function $\log\frac{1}{z}$, we obtain 
\begin{align}
&H\left(E_{1 0},\ldots,E_{m 0}\big| X_1, \ldots,X_{m}\right)\nonumber \\
& \hspace{6.5em}\geq \log\frac{1}{\mathbb{E}\left[\vol(A_{E_{10},\ldots,E_{m0},X_1,\ldots,X_m}) \right]}.
\end{align}
We will show that the average volume of $A_{E_{10},\ldots,E_{m0},X_1,\ldots,X_m}$ is upper bound by a constant times $\frac{1}{m^d}$ for all large enough $m$.

Recall that  for a fixed point configuration $(x_1,\ldots,x_m)$, the volume of the set of all points $x_0 \in \mathbb{T}^d$ satisfying the edge configuration $(e_{01},\ldots,e_{0m})$ is given by $\vol(A_{E_{10},\ldots,E_{m0},X_1,\ldots,X_m})= \mathbb{P}\left(E_{10},\ldots, E_{m0} \big| X_1, \ldots,X_{m}\right)$. Therefore, We can rewrite the average volume as follows:
\begin{align}
&\mkern -10mu \mathbb{E}\left[\vol(A_{E_{10},\ldots,E_{m0},X_1,\ldots,X_m})\right]\nonumber\\
&=\mathbb{E}\left[\mathbb{P}\left(E_{10},\ldots, E_{m0} \big| X_1, \ldots,X_{m}\right)\right]\nonumber\\
&=\mathbb{E}\left[\mathbb{E}\left[\mathbb{P}\left(E_{10},\ldots, E_{m0} \big| X_1, \ldots,X_{m}\right)| X_1, \ldots,X_{m}\right]\right]\nonumber\\
&=\mathbb{E}\left[\sum_{e_{10}, \ldots, e_{m0}}\mathbb{P}^2\left(e_{10},\ldots, e_{m0} \big| X_1, \ldots,X_{m}\right) \right]\label{eq:p_square}.
\end{align}
The summand in the above expression for a fixed point configuration $(x_1, \ldots, x_m)$ is 
\begin{align}
    \mathbb{P}^2\left(e_{10},\ldots, e_{m0} \big| x_1, \ldots,x_{m}\right)&= \int \limits_x \mathbbm{1}_A(x)\ dx \cdot \int \limits_{x'} \mathbbm{1}_A(x')\  dx'\nonumber\\
    & =\int \limits_x  \int \limits_{x'} \mathbbm{1}_A(x) \cdot\mathbbm{1}_A(x')\ dx'\  dx\nonumber,
\end{align}
where we use the letter $A$ for the set $A_{e_{10},\ldots,e_{m0},x_1,\ldots,x_m}$ to simplify the notation. This is the probability that both the points $x$ and $x'$ that are drawn randomly and independently from $\mathbb{T}^d$ lie in $A$. By substituting this probability in \eqref{eq:p_square}, we obtain
\begin{align}
&\mkern -40mu \mathbb{E}\left[\vol(A_{E_{10},\ldots,E_{m0},X_1,\ldots,X_m})\right]\nonumber \\
&=\mathbb{E}\left[\sum_{e_{10}, \ldots, e_{m0}}\mathbb{P}^2\left(e_{10},\ldots, e_{m0} \big| X_1, \ldots,X_{m}\right) \right]\nonumber\\
&=\mathbb{E}\left[\sum_{e_{10}, \ldots, e_{m0}}\int \limits_x  \int \limits_{x'} \mathbbm{1}_A(x) \cdot\mathbbm{1}_A(x')\ dx'\  dx \right]\nonumber\\
&= \int \limits_x  \int \limits_{x'}  \sum_{e_{10}, \ldots, e_{m0}}\mathbb{E}\left[\mathbbm{1}_A(x) \cdot\mathbbm{1}_A(x')\right]\ dx'\  dx\label{eq: vol_double_indicator}.
\end{align}
Because of the exchangeability of the random variables $(X_1, \ldots, X_m)$, the average $\mathbb{E}\left[\mathbbm{1}_A(x) \cdot\mathbbm{1}_A(x')\right]$ depends only on the number of ones in the edge configuration $(e_{01},\ldots,e_{0m})$ rather than their locations. Therefore, we have 
\begin{align}
    \sum_{e_{10}, \ldots, e_{m0}}\mathbb{E}\left[\mathbbm{1}_A(x) \cdot\mathbbm{1}_A(x')\right]= \sum_{k=0}^m \binom{m}{k} \mathbb{E}\left[\mathbbm{1}_{A_k}(x) \cdot\mathbbm{1}_{A_k}(x')\right],\label{eq:expect_lens}
\end{align}
where $A_k$ is the set $A$ with $e_{10}= \cdots =e_{k0}=1$ and $e_{k+1,0}= \cdots =e_{m0}=0$ for $0 \leq k\leq m$. 

Let ${B}_x \subseteq \mathbb{R}^d$ denote the Euclidean sphere of radius $r$ centered at $x \in \mathbb{R}^d$. Define $L_{x,x'}:=B_x \cap B_{x'}$, which is the lens formed by the intersection in $\mathbbm{R}^d$ of two spheres centered at $x$ and $x'$. Define $C_{x,x'}:=B_x \triangle B_{x'}$, which is the union of two crescents. Similarly, let $\tilde{B}_x \subseteq \mathbb{T}^d$ denote the ball of radius $r$ (with respect to toroidal distance) centered at $x \in \mathbb{T}^d$. Define $\tilde{L}_{x,x'}:=\tilde{B}_x \cap \tilde{B}_{x'}$ and $\widetilde{C}_{x,x'}:=\tilde{B}_x \triangle \tilde{B}_{x'}$ to be the lens and crescents in $\mathbb{T}^d$, respectively.

Observe that $\mathbb{E}\left[\mathbbm{1}_{A_k}(x) \cdot\mathbbm{1}_{A_k}(x')\right]$ is the probability that $X_1,\ldots, X_k \in \tilde{B}_x \cap \tilde{B}_{x'}$ and $X_{k+1},\ldots, X_m \in \tilde{B}_x^c \cap \tilde{B}_{x'}^c$.  With the above notation, \eqref{eq:expect_lens} becomes
\begin{align}
    &\sum_{e_{10}, \ldots, e_{m0}}\mathbb{E}\left[\mathbbm{1}_A(x)\cdot\mathbbm{1}_A(x')\right] = \sum_{k=0}^m \binom{m}{k} \mathbb{E}\left[\mathbbm{1}_{A_k}(x) \cdot\mathbbm{1}_{A_k}(x')\right]\nonumber\\
    &=\sum_{k=0}^m \binom{m}{k} \left[\vol\left(\tilde{L}_{x, x'}\right)\right]^k\left[\vol\left(\tilde{B}^c_x \cap \tilde{B}^c_{x'}\right)\right]^{m-k}\nonumber\\
    &=\left[\vol\left(\tilde{L}_{x, x'}\right)+\vol\left(\tilde{B}^c_x \cap \tilde{B}^c_{x'}\right)\right]^{m}\nonumber\\
    &=\left[1- \vol\left(\widetilde{C}_{x,x'}\right)\right]^{m}.\label{eq: one minus volume}
\end{align}

% To avoid the edge effects and simplify the analysis, let us make an assumption that \emph{the underlying distances are with respect to a torus}.

We will now bound the volume $\vol\left(\widetilde{C}_{x,x'}\right)$ for $x, x' \in \mathbb{T}^d$ using the condition $r<\frac{1}{4}$. 

If $d_t(x,x')\geq 2r$, then the two balls do not intersect, therefore $\vol\big(\tilde{C}_{x,x'}\big)= \vol\big(\tilde{B}_x\big) + \vol\big(\tilde{B}_{x'}\big)$. Note when the radius of a ball is less than a quarter, the wrapped around part will not overlap with the set. So, a linear translation of the center of the ball to  $(1/2,1/2,\ldots,1/2)$ will result in a ball with exactly the same volume as that of a Euclidean sphere in $\mathbb{R}^d$ of radius $r$. This implies that $\vol\big(\tilde{B}_x\big)=\vol\big(\tilde{B}_{x'}\big)= c_dr^d$, where $c_d$ denotes the volume of a Euclidean sphere of unit radius in $d$ dimensions, which implies that
\begin{align}\label{eq:no inter}
    \vol\big(\tilde{C}_{x,x'}\big) = 2c_dr^d \quad \text{if} \ d_t(x,x')\geq 2r.
\end{align}

On the other hand, in the case of $d_t(x,x')\leq 2r$, the two balls intersect. Also, note that $\tilde{B}_x \cup \tilde{B}_{x'}$ lies in a ball of radius $2r$, which is less than $1/2$. Again, by linearly translating the center of this ball to $(1/2,1/2,\ldots,1/2)$, we can see that the whole ball of radius $2r$ is strictly contained in $[0,1]^d$ without any wrapped-around part. As linear translation does affect the volume, we can conclude that if $d_t(x,x')\leq 2r$,
\begin{align}
     \vol\left(\tilde{C}_{x,x'}\right) &= \vol\left(C_{y,y'}\right), \label{eq:equal cres}
\end{align}
for some points $y,y' \in \mathbb{R}^d$ such that $d_t(x,x') = ||y-y'||$. The following lemma provides a bound on the volume $\vol\left(C_{y,y'}\right)$.

% \begin{lemma}\label{lem:vol_tilde_cresc_bound}
%     For $r<\frac{1}{4}$ and $x, x' \in [0,1]^d$, $\vol\left(\widetilde{C}_{x,x'}\right) \geq \frac{\vol\left(C_{x,x'}\right)}{2^{d-1}}$
% \end{lemma}
% \begin{proof}
%     Proof to be added. 
% \end{proof}
%The condition on the radius ensures that a significant portion of the crescents in $C_{x,x'}$ lie within $[0,1]^d$.
\begin{lemma}\label{lem:vol_cresc_bound}
If $||y-y'||<2r$, then $\vol\left(C_{y,y'}\right) \geq 2\cdot \frac{c_{d-1}r^{d-1}}{d}||y-y'||$, where $c_d$ denotes the volume of a Euclidean sphere of unit radius in $d$ dimensions.  
% If $||y-y'||\geq 2r$, then $\vol\left(C_{y,y'}\right) = 2c_{d}r^{d}$.   
\end{lemma}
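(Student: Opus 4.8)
The plan is to reduce the crescent volume to a one–dimensional integral by slicing perpendicular to the segment $\overline{yy'}$, and then estimate that integral. Set $\delta := \|y-y'\|$ and, after a rigid motion, assume $y = -\tfrac{\delta}{2}e_1$ and $y' = \tfrac{\delta}{2}e_1$. The reflection $x\mapsto(-x_1,x_2,\dots,x_d)$ swaps $B_y$ and $B_{y'}$, so $\vol(B_y\setminus B_{y'}) = \vol(B_{y'}\setminus B_y)$ and hence $\vol(C_{y,y'}) = 2\,\vol(B_{y'}\setminus B_y)$. To compute the latter, slice at $x_1 = s$: the section of $B_{y'}$ is the $(d-1)$-ball of radius $\rho_+(s) := \bigl((r^2-(s-\tfrac{\delta}{2})^2)^+\bigr)^{1/2}$ and that of $B_y$ is the $(d-1)$-ball of radius $\rho_-(s) := \bigl((r^2-(s+\tfrac{\delta}{2})^2)^+\bigr)^{1/2}$, both centred at the \emph{same} point $(s,0,\dots,0)$ of the slicing hyperplane. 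Since $(s-\tfrac{\delta}{2})^2 \le (s+\tfrac{\delta}{2})^2 \iff s\ge 0$, the section of $B_{y'}\setminus B_y$ is empty for $s<0$ and is the annulus between radii $\rho_-(s)$ and $\rho_+(s)$ for $s\ge 0$. Integrating $c_{d-1}\bigl(\rho_+(s)^{d-1}-\rho_-(s)^{d-1}\bigr)$ over $s\ge 0$ and substituting $t = s\mp\tfrac{\delta}{2}$ in the two pieces, the integrals telescope and yield
\[
  \vol(B_{y'}\setminus B_y) \;=\; c_{d-1}\int_{-\delta/2}^{\delta/2}\bigl(r^2-t^2\bigr)^{(d-1)/2}\,dt .
\]
(Equivalently one may write $\vol(B_{y'}\setminus B_y) = \vol(B_{y'}) - \vol(B_y\cap B_{y'})$ and use the standard spherical-cap formula $\vol(B_y\cap B_{y'}) = 2c_{d-1}\int_{\delta/2}^{r}(r^2-t^2)^{(d-1)/2}\,dt$ for the lens.) Thus $\vol(C_{y,y'}) = 2c_{d-1}\int_{-\delta/2}^{\delta/2}(r^2-t^2)^{(d-1)/2}\,dt$, and it remains to prove $\int_{-\delta/2}^{\delta/2}(r^2-t^2)^{(d-1)/2}\,dt \ge \tfrac{r^{d-1}}{d}\,\delta$ for all $0\le\delta<2r$.

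For this, substitute $t = \tfrac{\delta}{2}w$ and pull out $r^{d-1}$ to write the left side as $\tfrac{\delta r^{d-1}}{2}\,\Phi(a)$, where $a := \tfrac{\delta}{2r}\in[0,1)$ and $\Phi(a) := \int_{-1}^{1}(1-a^2w^2)^{(d-1)/2}\,dw$. The integrand is non-increasing in $a$ for each fixed $w$, so $\Phi$ is non-increasing on $[0,1]$ and $\Phi(a)\ge\Phi(1) = \int_{-1}^{1}(1-w^2)^{(d-1)/2}\,dw = c_d/c_{d-1}$, the last equality being the one-coordinate slicing formula for the volume of the unit $d$-ball. Finally $c_d/c_{d-1} = 2\int_0^{\pi/2}\cos^d\theta\,d\theta =: 2W_d$, and $W_d \ge 1/d$ follows by induction from $W_1 = 1$, $W_2 = \pi/4$ and $W_d = \tfrac{d-1}{d}W_{d-2}$ (since $\tfrac{d-1}{d(d-2)} \ge \tfrac1d$). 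Hence $\Phi(a)\ge 2/d$, which gives $\int_{-\delta/2}^{\delta/2}(r^2-t^2)^{(d-1)/2}\,dt \ge \tfrac{r^{d-1}}{d}\delta$ and the claimed bound.

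Most of this is bookkeeping; the one step that needs care is the slicing identity --- verifying that the two spherical sections are concentric and correctly nested, so that the crescent's section really is the annulus $\rho_-(s)\le|x_\perp|\le\rho_+(s)$, and that the resulting pair of integrals telescopes cleanly. If one prefers to bypass slicing, the cap-formula route above delivers the same integral representation immediately, after which the monotonicity estimate on $\Phi$ together with $W_d\ge 1/d$ is the entire content.
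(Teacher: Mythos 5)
Your proof is correct, and it overlaps with the paper's in its first half but finishes differently. Like the paper, you reduce the symmetric difference to twice a central slab of a single ball: the paper does this pictorially ("moving half of the lens to the opposite pole"), whereas you make it explicit by slicing perpendicular to $\overline{yy'}$ and telescoping, arriving at $\vol(C_{y,y'}) = 2c_{d-1}\int_{-\delta/2}^{\delta/2}(r^2-t^2)^{(d-1)/2}\,dt$ with $\delta = \|y-y'\|$. The divergence is in the final estimate: the paper lower bounds the slab geometrically by an inscribed double cone with base the equatorial $(d-1)$-ball of radius $r$ and apexes at $(\pm\delta/2,0,\ldots,0)$, which gives the constant $\frac{c_{d-1}r^{d-1}}{d}\delta$ in one line (though the containment of the cone in the slab, which uses $\delta \le 2r$, is left unverified there); you instead keep the exact slab integral and bound it analytically, via monotonicity of $\Phi(a)=\int_{-1}^1(1-a^2w^2)^{(d-1)/2}dw$ in $a=\delta/(2r)$, the identity $\Phi(1)=c_d/c_{d-1}=2W_d$, and the Wallis-integral inequality $W_d\ge 1/d$. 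Both routes yield the identical constant; yours is fully self-contained and makes the worst case ($\delta\to 2r$) transparent, at the cost of the short Wallis induction, while the paper's cone argument is shorter but relies on a geometric containment that deserves the explicit check you avoid.
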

\begin{proof} The condition $||y-y'||<2r$ means that there is a non-empty intersection between the spheres. By symmetry and translational-invariance of the Lebesgue measure, the volume of the union of two crescents $C_{y,y'}$ is equal to twice the volume of the unshaded region of the sphere centered at origin shown on the right side of Fig.~\ref{fig:lens_volume}. The unshaded region is obtained by moving half of the lens to the opposite pole of the sphere, and it is the set of all points in the Euclidean unit sphere whose first coordinate lies between $-\frac{||y-y'||}{2}$ and $\frac{||y-y'||}{2}$.

The desired  volume in between the two shaded regions is lower bounded by the volume enclosed by the two cones whose bases are in the hyperplane $x_1=0$ with radii $R$ and the vertices are at points $\left(\frac{||y-y'||}{2}, 0,\ldots, 0\right)$ and $-\left(\frac{||y-y'||}{2}, 0,\ldots, 0\right)$. As the total volume of these cones is $\frac{c_{d-1}r^{d-1}}{d}||y-y'||$ for the contant $c_{d-1}$ is the volume of a Euclidean sphere of unit radius in $\mathbb{R}^{d-1}$, we have 
$$\frac{\vol\left(C_{y,y'}\right)}{2} \geq \frac{c_{d-1}r^{d-1}}{d}||y-y'||.$$ 

% If $||y-y'||\geq 2r$, then the two spheres do not intersect, therefore $\vol\left(C_{y,y'}\right)= 2c_{d}r^{d}$, which is the sum of the volumes of the two spheres.
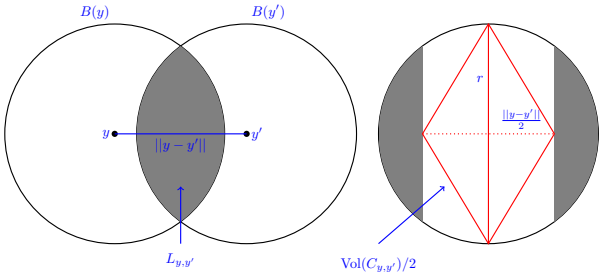
\begin{figure}[h!]
\centering
\resizebox{0.9\columnwidth}{!}{\begin{tikzpicture}[scale=1.1]

% =========================
% LEFT: Two overlapping balls B(x) and B(x')
% =========================
\begin{scope}
    % Parameters
    \def\r{2.5}
    \def\d{3}

    % Circles
    \draw[thick] (-\d/2,0) circle (\r);
    \draw[thick] ( \d/2,0) circle (\r);

    % Shaded lens
    \begin{scope}
        \clip (-\d/2,0) circle (\r);
        \fill[pattern=north east lines,gray]
            ( \d/2,0) circle (\r);
    \end{scope}

    % Centers
    \fill (-\d/2,0) circle (2pt);
    \fill ( \d/2,0) circle (2pt);

    % Distance line
    \draw[blue,thick] (-\d/2,0) -- (\d/2,0);
    \node[blue,below] at (0,0) {$||y-y'||$};

    % Labels
    \node[blue,above left]  at (-\d/2,\r) {$B(y)$};
    \node[blue,above right] at ( \d/2,\r) {$B(y')$};

    % Lens label
    \draw[blue,->,thick] (0,-2.5) -- (0,-1.2);
    \node[blue,below] at (0,-2.6) {$L_{y,y'}$};

    % x and x'
    \node[blue,left]  at (-\d/2,0) {$y$};
    \node[blue,right] at ( \d/2,0) {$y'$};
\end{scope}

% =========================
% RIGHT: Circle with diamond and shaded caps
% =========================
\begin{scope}[xshift=7cm]

    \def\r{2.5}    % circle radius
    \def\a{3}      % length 'a'

    % Outer circle
    \draw[thick] (0,0) circle (\r);

    % Shaded caps: clipped parts outside diamond (left and right)
    \begin{scope}
        \clip (0,0) circle (\r);

        % Left cap: outside the diamond on the left side
        \fill[pattern=north east lines,gray]
            (-\r,-\r) -- (-\a/2,-\r) -- (-\a/2,\r) -- (-\r,\r) -- cycle;

        % Right cap: outside the diamond on the right side
        \fill[pattern=north east lines,gray]
            (\r,-\r) -- (\a/2,-\r) -- (\a/2,\r) -- (\r,\r) -- cycle;
    \end{scope}

    % Red diamond inside circle (vertices at top, bottom, left, right)
    \draw[red,thick] 
        (0,\r) -- (\a/2,0) -- (0,-\r) -- (-\a/2,0) -- cycle;

    % Red lines inside diamond crossing at center
    \draw[red,thick] (0,-\r) -- (0,\r);
    \draw[red,thick,dotted] (-\a/2,0) -- (\a/2,0);

    % Labels for radius and half-length 'a/2'
    \node[blue, left] at (0,\r/2) {$r$};
    \node[blue, above] at (\a/4,0) {$\frac{||y-y'||}{2}$};

    % Arrow and label for C_{y,y'}/2 below left cap
    \draw[blue,->,thick] (-\r,-2.5) -- (-\a/2+0.5,-1.2);
    \node[blue,below] at (-\r,-2.7) {$\vol(C_{y,y'})/2$};

\end{scope}

\end{tikzpicture}}
\caption{The volume of the two crescents $C_{y,y'}$ is twice the volume of the unshaded region of the sphere centered at origin shown on the right side. The unshaded region is lower bounded by the volume enclosed by the two cones whose bases are in the hyperplane $y_1=0$ with radii $R$ and the vertices are at points $\left(-\frac{||y-y'||}{2}, 0,\ldots, 0\right)$ and $\left(\frac{||y-y'||}{2}, 0,\ldots, 0\right)$.} 
    \label{fig:lens_volume}
\end{figure}
\end{proof}

By combining Lemma~\ref{lem:vol_cresc_bound} and \eqref{eq:equal cres}, we can conclude that 
\begin{align}\label{eq:inter}
    \vol\big(\tilde{C}_{x,x'}\big) \geq  2\cdot \frac{c_{d-1}r^{d-1}}{d} d_t(x,x') \quad \text{if} \ d_t(x,x')\leq 2r.
\end{align}
Based on the distance $d_t(x,x')$, we can use \eqref{eq:no inter} or \eqref{eq:inter} to bound \eqref{eq: one minus volume}, and then we can  substitute back into \eqref{eq: vol_double_indicator} to obtain
\begin{align}
&\mathbb{E}\left[\vol(A_{E_{10},\ldots,E_{m0},X_1,\ldots,X_m})\right]\nonumber\\
&= \int \limits_x  \int \limits_{x'}  \sum_{e_{10}, \ldots, e_{m0}}\mathbb{E}\left[\mathbbm{1}_A(x) \cdot\mathbbm{1}_A(x')\right]\ dx'\  dx\nonumber\\
&\leq \int \limits_x  \int \limits_{x'}  \bigg\lbrace\left[1- \frac{2c_{d-1}r^{d-1}}{d}d_t(x,x')\right]^{m} \mathbbm{1}\{d_t(x,x')< 2r\} \nonumber\\
& \hspace{6.5 em}+ [1-2c_dr^d]^m\mathbbm{1}\{d_t(x,x')\geq 2r\}\bigg\rbrace\ dx'\  dx\nonumber\\
&= \int_0^{\sqrt{d}} \bigg\lbrace\left[1- \frac{2c_{d-1}r^{d-1}}{d}s\right]^{m} \mathbbm{1}\{s< 2r\} \nonumber\\
& \hspace{6.5 em} + [1-2c_dr^d]^m\mathbbm{1}\{s\geq 2r\}\bigg\rbrace f(s) ds,\label{eq:volume_bound}
\end{align}
where $f$ is the probability density function of the random distance $d_t(X,X')$, where $X$ and $X'$ are  independent and uniformly distributed points over $\mathbb{T}^d$. Note that as $r\leq \frac{1}{4}$, $2c_dr^d \leq 1$ and $\frac{2c_{d-1}r^{d-1}}{d}s \leq 1$ for $s\leq 2r$. The integral corresponding to the second term is upper bounded by $[1-2c_dr^d]^m$. For the first term, we need the following lemma.
\begin{lemma}\label{lem:gamma_integral}
Let $F$ and $f$ be the cumulative distribution function (CDF) and the probability density function (PDF), respectively, of $d_t(X, X')$, where $X$ and $X'$ are  independent and uniformly distributed points over $\mathbb{T}^d$. Let $K$ be a constant, and choose $s_0$ to be any value less than $\frac{1}{2}$ such that for all $s \leq s_0$, $0\leq (1-Ks)^m \leq 1$. Then, 
    \begin{align*}
        &\int \limits_0^{s_0} (1- Ks)^{m} f(s)\ ds \\& = (1- Ks_0)^{m} F(s_0)+m \int \limits_0^{s_0}(1- Ks)^{m-1} 
        F(s)\ ds\\
        & \leq (1- Ks_0)^{m} +\frac{c_d d!}{K^{d+1}}\cdot \frac{1}{m^d}.
    \end{align*}    
\end{lemma}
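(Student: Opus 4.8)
\emph{Proof plan.} The displayed identity is a single integration by parts, and the inequality then reduces to an elementary estimate on the law of $d_t(X,X')$ near the origin together with a Beta-function bound.

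\emph{The equality.} I would write $\int_0^{s_0}(1-Ks)^m f(s)\,ds = \int_0^{s_0}(1-Ks)^m\,dF(s)$. Since $X$ and $X'$ are independent and uniform on $\mathbb{T}^d$, the distance $d_t(X,X')$ is absolutely continuous, so $F(0)=0$ and the lower boundary term drops; integrating by parts (discarding that term) produces exactly the right-hand side of the first line.

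\emph{The inequality.} I would bound the two terms separately. For the boundary term, $F(s_0)\le 1$ immediately gives $(1-Ks_0)^m F(s_0)\le(1-Ks_0)^m$. For the integral, the key input is the same ``no wrap-around'' observation used repeatedly in the paper: when $s<\tfrac12$ a toroidal ball of radius $s$ is isometric to a Euclidean ball of radius $s$, so $F(s)=c_d s^d$, and in particular $F(s)\le c_d s^d$ on all of $[0,s_0]$ because $s_0<\tfrac12$. Substituting this bound and changing variables $t=Ks$ turns $m\int_0^{s_0}(1-Ks)^{m-1}F(s)\,ds$ into a constant multiple of $\int_0^{Ks_0}(1-t)^{m-1}t^d\,dt$; enlarging the range of integration from $[0,Ks_0]\subseteq[0,1]$ to $[0,1]$ — legitimate since the integrand is nonnegative there, which is precisely what the hypothesis $0\le(1-Ks)^m$ on $[0,s_0]$ guarantees — leaves the Beta integral $\int_0^1(1-t)^{m-1}t^d\,dt=B(d+1,m)=\tfrac{d!\,(m-1)!}{(m+d)!}$. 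Since $m\,B(d+1,m)=\tfrac{d!\,m!}{(m+d)!}=\tfrac{d!}{(m+1)(m+2)\cdots(m+d)}\le d!\,m^{-d}$, and tracking the power of $K$ from the change of variables contributes the factor $K^{-(d+1)}$, this term is at most $c_d d!\,K^{-(d+1)}m^{-d}$, which together with the boundary bound gives the claimed estimate.

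\emph{Where the work is.} There is no real obstacle here; the lemma is essentially a bookkeeping computation. The two things one must get right are (i) the explicit small-distance CDF $F(s)=c_d s^d$ for $s<\tfrac12$ on $\mathbb{T}^d$, which is immediate from the absence of wrap-around at radii below $\tfrac12$, and (ii) recognizing and evaluating the Beta integral. I would be careful to check that the hypothesis on $s_0$ is exactly what keeps every integrand nonnegative when extending the domain of integration, and to track the exact power of $K$ that survives the substitution $t=Ks$.
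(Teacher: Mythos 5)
Your proposal is correct and follows essentially the same route as the paper: integration by parts for the equality, $F(s_0)\le 1$ for the boundary term, and the small-radius CDF $F(s)=c_d s^d$ (valid because $s_0<\tfrac{1}{2}$ precludes wrap-around on $\mathbb{T}^d$). The only divergence is the final estimate, where the paper bounds $(1-Ks)^{m-1}$ by an exponential and invokes $\Gamma(d+1)=d!$, while you evaluate the Beta integral exactly via $m\,B(d+1,m)=d!\,m!/(m+d)!\le d!\,m^{-d}$ — a slightly cleaner variant that yields the same constant $c_d d!/K^{d+1}$.
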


    \begin{proof}
        The equality follows by carrying out integration by parts. For the inequality, we can determine the value of CDF $F$ at $0 \leq s \leq s_0 \leq  \frac{1}{2} $:
        \begin{align*}
            F(s)&=\int \limits_x  \int \limits_{x'} \mathbbm{1}\{d_t(x,x') \leq s\} \ dx'\  dx \\&=\int \limits_x  \vol(\tilde{B}_x(s))\  dx
            \\&= \int \limits_x  c_d s^d  dx
            \\&= c_d s^d,
        \end{align*}
        where in the second equality, $\tilde{B}_x(s)$ is the ball of radius $s$ centered at $x$, and  the middle equality is due to the fact when the radius is smaller than $\frac{1}{2}$, $\vol(\tilde{B}_x(s)) = c_d s^d$, which follows from the argument near \eqref{eq:no inter}. 
     By making use of the inequality $(1-x)^m \leq e^{-mx}$ for $0 \leq x \leq 1$, we can write 
     \begin{align}
         m\int \limits_0^{s_0}(1- Ks)^{m-1} F(s)\ ds &\leq m\int \limits_0^{s_0} e^{-mKs} F(s)\ ds\nonumber\\
         &=mc_d \int \limits_0^{s_0} e^{-mKs} s^d\ ds\nonumber\\
         & = \frac{c_d}{K^{d+1}m^d}\int \limits_0^{\frac{s_0}{Km}} e^{-u} u^d\ du\nonumber\\
         & \leq \frac{c_d}{K^{d+1}m^d}\int \limits_0^{\infty} e^{-u} u^d\ du\nonumber\\
         &=\frac{c_d}{K^{d+1}m^d}\Gamma(d+1)\nonumber\\
         &= \frac{c_d d!}{K^{d+1}}\cdot \frac{1}{m^d}\nonumber.
     \end{align}
     proving the lemma. 
    \end{proof}

By applying Lemma~\ref{lem:gamma_integral} to the first term of \eqref{eq:volume_bound} with $K=\frac{2c_{d-1}r^{d-1}}{d}$ and $s_0=2r$, we obtain
\begin{align}
&\mathbb{E}\left[\vol(A_{E_{10},\ldots,E_{m0},X_1,\ldots,X_m})\right]\nonumber\\
&\leq \int_0^{\sqrt{d}} \bigg\lbrace\left[1- \frac{2c_{d-1}r^{d-1}}{d}s\right]^{m} \mathbbm{1}\{s< 2r\} \nonumber\\ & \hspace{8 em}+ [1-2c_dr^d]^m\mathbbm{1}\{s\geq 2r\}\bigg\rbrace f(s) ds,\nonumber\\
& \leq \left(1- \frac{2c_{d-1}r^{d-1}}{d}\cdot 2r\right)^{m} +\frac{c_d d^{d+1} d!}{(2c_{d-1}r^{d-1})^{d+1}}\cdot \frac{1}{m^d} \nonumber\\ & \hspace{17 em}+ (1-2c_dr^d)^m\nonumber\\
& \leq \alpha \cdot \frac{1}{m^d},
\end{align}
where the last inequality holds for all large enough $m$  for  some constant $\alpha>0$. This proves that
\begin{align}
H\big(E_{1 0},\ldots,E_{m 0}&\big| X_1, \ldots,X_{m}\big) \nonumber\\
    &\geq \log\frac{1}{\mathbb{E}\left[\vol(A_{E_{10},\ldots,E_{m0},X_1,\ldots,X_m}) \right]}\\
    & \geq \log \left(\frac{m^d}{\alpha}\right)= d\log m - \log(\alpha),
    \end{align}
    for all large enough $m$. By using this result in the argument of  \eqref{eq:entropy low arg}, we have the lower bound
$$H(G_m)\geq dm\log m - o(m \log m)$$
for the connection range $0 < r \leq \frac{1}{4}$, completing the proof of the theorem.

\end{document}